\providecommand{\algorithmname}{Algorithm}
\numberwithin{equation}{section}
\numberwithin{figure}{section}
\theoremstyle{plain}
\newtheorem{thm}{\protect\theoremname}[section]
\theoremstyle{definition}
\theoremstyle{remark}
\theoremstyle{plain}
\newtheorem{lem}[thm]{\protect\lemmaname}
\newtheorem*{lem*}{Lemma}
\theoremstyle{remark}
\theoremstyle{plain}
\theoremstyle{plain}
\newtheorem{proposition}[thm]{\protect\propositionname}
\providecommand{\claimname}{Claim}
\providecommand{\definitionname}{Definition}
\providecommand{\lemmaname}{Lemma}
\providecommand{\remarkname}{Remark}
\providecommand{\theoremname}{Theorem}
\providecommand{\corollaryname}{Corollary}
\providecommand{\propositionname}{Proposition}
\newcommand{\reals}{\mathbb{R}}
\newcommand{\RN}{\mathbb{R}^N}
\newcommand{\CN}{\mathbb{C}^N}
\newcommand{\RNN}{\mathbb{R}^{N\times N}}
\newcommand{\CNN}{\mathbb{C}^{N\times N}}
\newcommand{\inner}[2]{\left\langle {#1}, {#2} \right\rangle}
\newcommand{\adj}{^\mathrm{adj}}
\newcommand{\transpose}{^\top\! }
\newcommand{\sign}{\mathrm{phase}}
\newcommand{\trace}{\mathrm{Tr}}
\newcommand{\diag}{\mathrm{diag}}
\newcommand{\calM}{\mathcal{M}}
\newcommand{\T}{\mathrm{T}}
\newcommand{\Proj}{\operatorname{Proj}}
\newcommand{\ddiag}{\operatorname{ddiag}}
\newcommand{\D}{\operatorname{D}\!}
\newcommand{\grad}{\operatorname{grad}\!}
\newcommand{\Hess}{\operatorname{Hess}\!}
\newcommand{\Retr}{\operatorname{Retr}}
\newcommand{\DFT}[1]{\operatorname{DFT}\!\left( {#1} \right)}
\newcommand{\dist}{\operatorname{dist}}
\begin{document}


\title{Bispectrum Inversion\\with Application to Multireference Alignment}


\author{Tamir Bendory$^{\dagger}$\thanks{$\dagger$ These two authors contributed equally to the work.}, Nicolas Boumal$^{\dagger}$, Chao Ma, Zhizhen Zhao  and Amit Singer\thanks{TB, NB, CM and AS are with Princeton University in PACM and the Mathematics Department. ZZ is with the University of Illinois at Urbana-Champaign in the Department of Electrical and Computer Engineering. {The authors were partially supported by Award Number R01GM090200 from the NIGMS, FA9550-17-1-0291 from AFOSR, Simons Investigator Award and Simons Collaboration on Algorithms and Geometry from Simons Foundation, and the Moore Foundation Data-Driven Discovery Investigator Award. NB is partially supported by NSF grant DMS-1719558. ZZ is partially supported by National Center for Supercomputing Applications Faculty Fellowship and University of Illinois at Urbana-Champaign College of Engineering Strategic Research Initiative. }}}
\maketitle

\begin{abstract}

We consider the problem of estimating a signal from  noisy circularly-translated  versions of itself, called \emph{multireference alignment} (MRA).  
One natural approach to MRA could be to estimate the shifts of the observations first, and  infer the signal by aligning and averaging the data.
In contrast, we consider a method based on estimating the signal directly, using features of the signal that are invariant under translations. 
Specifically, we estimate the power spectrum and the bispectrum of the signal from the observations. Under mild assumptions, these invariant features contain enough information to infer the signal.
In particular, the bispectrum can be used to estimate the Fourier phases. To this end, we propose and analyze a few algorithms. {Our main  methods consist of  non-convex optimization  over the smooth manifold of phases. Empirically, in the absence of noise, these non-convex algorithms appear to converge to the target signal with random initialization. The algorithms are also robust to noise.}
 We then suggest three additional methods. These methods are based on frequency marching, semidefinite relaxation and integer programming. The first two methods provably recover the phases exactly in the absence of noise. In the high noise level regime, the invariant features approach for MRA results in stable estimation if the number of measurements scales like the cube of the noise variance, which is {the information-theoretic rate.} Additionally, it requires only one pass over the data which is important at low signal--to--noise ratio when the number of observations must be large.

\end{abstract}

\begin{IEEEkeywords}
 bispectrum, multireference alignment, phase retrieval, non-convex optimization, optimization on manifolds, semidefinite relaxation, phase synchronization, frequency marching, integer programming, cryo-EM
 \end{IEEEkeywords}

\section{Introduction} \label{sec:intro}






We consider the problem of estimating a discrete signal from multiple noisy and translated  (i.e., circularly shifted) versions of itself, called \emph{multireference aligment} (MRA). This problem occurs in a variety of applications in biology \cite{diamond1992multiple,theobald2012optimal,park2011stochastic,park2014assembly}, radar \cite{zwart2003fast,gil2005using}, image registration   and super-resolution  \cite{dryden1998statistical,foroosh2002extension,robinson2009optimal}, and has been the subject of recent theoretical analysis \cite{abbe2017sample,bandeira2017optimal}. The MRA model reads 
\begin{align}
	{\xi}_j & = R_{r_j}{x}+\varepsilon_j,\quad j=1,\dots,M,
	\label{eq:MRA}
\end{align}
where $\varepsilon_j$ are i.i.d. normal random vectors with variance $\sigma^2$ and the underlying signal $x$ is in $\mathbb{R}^N$ or in $\mathbb{C}^N$. Operator $R_{r_j}$ rotates the signal ${x}$ circularly by $r_j$ locations, namely, $(R_{r_j}{x})[n]={x}[n-r_j ]$, where indexing is zero-based and considered modulo $N$ (throughout the paper). While both $x$ and the translations $\{r_j\}$ are  unknown, we stress that the goal here is merely to estimate $x$. This estimation is possible  only up to an arbitrary translation. 

A chief motivation  for this work arises from the imaging technique called single particle Cryo-Electron Microscopy (Cryo-EM), which allows  to visualize molecules at near-atomic resolution \cite{bartesaghi20152,sirohi20163}. In  Cryo-EM, we aim to estimate a three dimensional (3D) object from its two-dimensional (2D) noisy projections, taken at unknown viewing directions~\cite{frank2006three,van2000single}. {While typically the  recovery process involves alignment of multiple observations in a low signal-to-noise ratio (SNR) regime, the underlying goal is merely to estimate the 3D object. In this manner, with the unknown shifts corresponding to the unknown viewing directions, MRA can be understood as a simplified model for Cryo-EM. }
%

Existing approaches for MRA can be classified into two main categories. The first class of methods aims to estimate the set of translations $\{r_j\}$ first. Given this set, estimating $x$ can be achieved easily by aligning all observations $\xi_j$ and then averaging to reduce the noise. The second class, which we favor in this paper, consists of methods which aim to estimate the signal directly, without estimating the shifts.

Considering the first class, one intuitive approach to estimating the translations is to fix a template observation, say $\xi_1$, and to estimate the relative translations by cross-correlation. This is called template matching. Specifically, $r_j$ is estimated as
\begin{equation*}
\hat{r}_j=\arg\max_k\Re\left\{\sum_{n=0}^{N-1}\xi_1[n]\overline{\xi_j[n+k]}\right\},\quad j=2,\cdots,M,
\end{equation*}
where $\Re\{z\}$ and $\overline{z}$ denote the real part and the conjugate of a complex number $z$. This approach requires only one pass over the data: for each observation, the best shift can be computed in ${O}(N\log N)$, and the aligned observations can be averaged online. This results in a total computational cost of ${O}(MN\log N)$, see Table~\ref{tab:AmitsTable}.
While this approach is simple and efficient, it necessarily fails below a critical SNR---see Figure \ref{fig:Example} for a representative example.

The issue with template matching is that we rely on aligning each observation to only one template: this is error prone at low SNR. Instead, to derive a more robust estimator, one can look for the most suitable alignment among all pairs of observations. The $M^2$ relative shifts thus computed must then be reconciled into a compatible choice of $M$ shifts for the individual observations. This is a discrete version of the \emph{angular synchronization} problem, see \cite{singer2011angular,boumal2016nonconvex,perry2016message,chen2016projected,bandeira2014tightness,zhong2017near}. 
The computational complexity of aligning all pairs individually is ${O}(M^2 N\log N)$, while storing the results uses ${O}(M^2)$ memory.  
%

Alternative algorithms for estimating the translations are based on different SDP relaxations \cite{bandeira2014multireference,chen2014near}, iterative template alignment \cite{kosir1995multiple}, zero phase representations \cite{zwart2003fast} and  neural networks \cite{gil2005using}.  The statistical limits of alignment tasks were derived for a variety of setups and noise models, see for instance \cite{aguerrebere2016fundamental,robinson2004fundamental,weiss1983fundamental,weinstein1984fundamental}.
For example, for a continuous, 2D version of the MRA model, it was shown that the Cram\'er--Rao lower bound (CRLB) for translation estimation is proportional to the noise variance $\sigma^2$~\cite{aguerrebere2016fundamental}; crucially, it does not improve with $M${, even if the underlying signal is known.
This is motivation to consider the second category of MRA methods, where shifts are not estimated.} 

Section~\ref{sec:EM} elaborates on expectation maximization (EM) which tries to compute the  maximum marginalized likelihood estimator (MMLE) of the signal---marginalization is done over the shifts. This method acknowledges the difficulty of alignment by working not with estimates of the shifts themselves, but rather with estimates of the probability distributions of the shifts. As a result, EM achieves excellent numerical performance in practice. However its computational complexity is high and its performance is not understood in theory. 
%

\begin{table*} 
\begin{center}
	\begin{tabular}{| l | l | l | l| }
		\hline
		Method & Computational complexity & Storage requirement & Comments \\ \hline
		Template alignment & ${O}(MN\log N)$ & ${O}(N)$ & Fails at moderate SNR (see Figure~\ref{fig:Example}) \\ \hline
		Angular synchronization & ${O}(M^2N\log N)$ & ${O}(M^2)$ & Fails at low SNR \\ \hline
		Expectation maximization & ${O}(T MN\log N )$ & ${O}(MN)$ & Empirically accurate; $\#$iterations $T$ grows with noise level \\ \hline
		Invariant features  (this paper) & ${O}(MN^2 + F(N))$ & ${O}(N^2)$ & Under mild conditions, accurate estimation if $M$ grows as $\sigma^6$ \\
\hline
	\end{tabular}
\caption{Comparison of main MRA approaches. $F(N)$ denotes the complexity of inverting the bispectrum. For instance, for the FM algorithm, $F(N)=O(N^2)$.
	Storage requirements include the possibility of streaming computations where possible. } \label{tab:AmitsTable}
\end{center}
\end{table*}

\begin{figure*}

\centering

\includegraphics[scale=1]{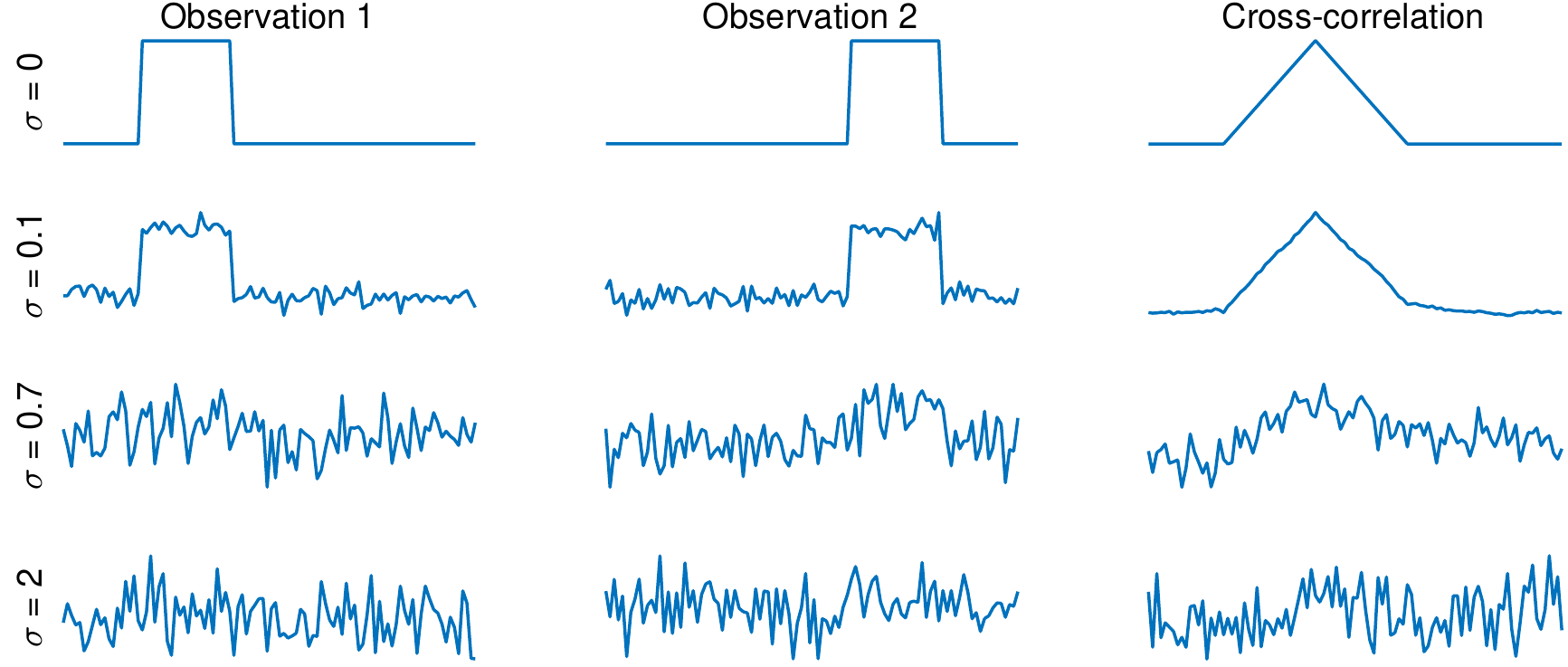}

\protect\caption{\label{fig:Example} Alignment of two translated versions of the same signal  in the presence of  i.i.d. Gaussian noise with various standard deviations $\sigma$. The true signal in $\mathbb{R}^{100}$ is a window of length 22 and height 1. Each  row presents two observations and their cross-correlation. {Importantly, beyond a certain threshold, noise makes pairwise alignment impossible.} }
\end{figure*}

{It has been shown recently that the sample complexity of MRA, under assumption that shifts are distributed uniformly, is proportional to $\sigma^6$ in the low SNR regime. In other words, the number of measurements $M$ needs to scale like $\sigma^6$ to retain a constant estimation error~\cite{bandeira2017optimal}. 

In this work, we propose a framework which achieves this sample complexity by estimating} the sought signal $x$ directly using features that are invariant under translations. For instance, the mean of $x$ is invariant under translation and can be estimated easily from the mean of all observations. 
{We further use the power spectrum and the bispectrum of the observations---which are Fourier-transform based invariants---to estimate the magnitudes and phases of the signal's Fourier transform, respectively.
	
	{For any fixed noise level (which may be arbitrarily large), these features can be estimated accurately provided sufficiently many measurements are available. Hence, 
our approach allows to deal with \emph{any noise level}. }
{Besides achieving the sample complexity, the computational complexity and memory requirements of the methods we describe} are relatively low. Indeed, the only operations whose computational cost grows with $M$ are computations of averages over the data. These can be performed on-the-fly and are easily parallelizable. {We mention that a recent tensor decomposition algorithm also achieves this estimation rate~\cite{perry2017sample}.} 
}

{Given estimators for the mean and power spectrum of $x$, estimating the DC component and Fourier magnitudes of $x$ is straightforward. In this paper, we thus focus on the task of recovering the Fourier phases of $x$ from an estimator of its bispectrum. We propose two {non-convex optimization algorithms on the manifold of phases} for this task, which we call bispectrum inversion. We also discuss three additional algorithms which do not require initialization (and hence could be used to initialize others), based on frequency marching, SDP relaxation and integer programming. The first two methods recover the phases exactly in the absence of noise.}

{Beyond MRA, the bispectrum  plays a central role in a variety of signal processing applications.} For instance, it is a key tool to separate Gaussian and non-Gaussian processes  \cite{brockett1988bispectral,bartolo2004non}. It is also used to investigate the cosmic background radiation \cite{luo1993angular,wang2000cosmic}, seismic signal processing \cite{matsuoka1984phase},  image deblurring \cite{chang1991blur}, feature extraction for radar \cite{chen2008feature}, analysis of EEG signals \cite{ning1989bispectral}, MIMO systems~\cite{chen2001frequency} and classification \cite{zhao2014rotationally} (see also \cite{mendel1991tutorial,nikias1987bispectrum,marabini1994practical,marabini1996new,petropulu1998phase} and references therein). 
{In Section~\ref{sec:proposed_algorithm}, we review previous works on bispectrum inversion~\cite{giannakis1989signal,sadler1992shift,matsuoka1984phase}}.  
Reliable algorithms to invert the bispectrum, as studied here, may prove useful in some of these applications.

The paper is organized as follows.
 Section \ref{sec:invariant_features} discusses the invariant feature approach for MRA. Section \ref{sec:proposed_algorithm} presents {the non-convex algorithms on the manifold of phases for bispectrum inversion.} 
 Section \ref{sec:algorithms} is devoted to additional algorithms that can be used to initialize the non-convex algorithms. Section \ref{sec:optimization_over_phases} analyzes one of the proposed non-convex algorithm.
 Section \ref{sec:EM} elaborates on the EM approach for MRA, Section \ref{sec:numerical_exp} shows numerical experiments and Section \ref{sec:conclusion} offers conclusions and perspective.

Throughout the paper we use the following notation. Vectors {$x$ in $\RN$ or $\CN$ and $y\in\CN$} denote the underlying signal and its discrete Fourier transform (DFT), respectively. {In the sequel, all indices are understood modulo $N$, namely, in the range $0,\ldots,N-1$.}
{The phase of a complex scalar $a$, defined as $a/ \vert a \vert $ if $a\neq 0$ and zero otherwise, is denoted by $\sign(a)$ or $\tilde{a}$}. 
The conjugate-transpose of a vector $z$ is denoted by $z^*$.
We use $'\circ'$ to denote the Hadamard (entry-wise) product, $\mathbb{E}$ for expectation, Tr$(Z)$ for the trace and $\|Z\|_{\mathrm{F}}$ for the Frobenius norm of a matrix $Z$. 
We reserve $T(z)$ for circulant matrices determined by their first row $z$, i.e., $T(z)[k_1,k_2]=z[k_2-k_1]$, and $\mathcal{H}^{N}$ for the set of Hermitian matrices of size $N\times N$.


\section{ Multireference Alignment via Invariant Features }  \label{sec:invariant_features}

We propose to solve the MRA problem directly using features that are {invariant under translation}s. Unlike pairwise alignment, this approach fuses information from all $M$ observations together---not just of pairs---and it only aims to recover the signal itself---not the translations. {The essence of this idea was discussed as a possible extension  in~\cite[Appendix A]{bandeira2014multireference}}. {The invariant features can be understood either as auto-correlation functions or as their Fourier transform. { In this work, we make use of the first three invariants defined as 
	\begin{align}\label{eq:ac}
	c_1 &= \mu_x = \frac{1}{N}\sum_{n=0}^{N-1} x[n], \nonumber \\
	c_2[n_1] &= \frac{1}{N}\sum_{n=0}^{N-1} x[n]\overline{x[n-n_1]}, \\
	c_3[n_1,n_2] &= \frac{1}{N}\sum_{n=0}^{N-1} x[n]\overline{x[n-n_1]}x[n+n_2] \nonumber,
	\end{align}			
{for $n_1,n_2=0,\ldots,N-1$.}
It is clear that $c_1,c_2,c_3$ are invariant under circular shifts of $x$. For higher-order invariants based on auto-correlations, see for instance~\cite{swami1990linear}. 
	}

The first feature is the mean of the signal which is the auto-corrleation function of order one (i.e., $c_1$ in~\eqref{eq:ac}).  The distribution of the mean of $\xi_j$ is then given by  $\mu_{\xi_j}\sim \mathcal{N}\left(\mu_x,\frac{\sigma^2}{N}\right)$ and  we can estimate $\mu_x$ as
\begin{equation} \label{eq:est_mean_x}
\hat{\mu}_x = \frac{1}{M}\sum_{j=1}^M \left(\frac{1}{N}\sum_{n=0}^{N-1}\xi_j[n]\right)\sim\mathcal{N}\left({\mu}_x,\frac{\sigma^2}{NM}\right).
\end{equation}  

Estimating the signal's mean supplies only  limited information about the signal itself. Thus, we consider also the auto-correlation function of order two (i.e., $c_2$ in~\eqref{eq:ac}). Its Fourier transform, the power spectrum, is explicitly defined as
\begin{align*}
	P_x[k] & = \vert y[k]\vert^2, 
\end{align*}
for all $k$, where $y$ is the DFT of $x$. An alternative way to understand the invariance of the power spectrum under shifts is through the effect of shifts on the DFT of a signal:
\begin{align}
	\DFT{R_s x}[k] & = y[k] \cdot e^{-2\pi i k s / N}. 
	\label{eq:shiftDFT}
\end{align}
Thus, shifts only affect the phases of the DFT, so that $P_{R_s x}=P_{x}$ for any shift $R_s$. Furthermore, owing to independence of the noise with respect to the signal itself and to the shift,
\begin{align*}
	\mathbb{E}\left\{P_{\xi_j} [k]\right\} & = P_x[k] + N\sigma^2, 
\end{align*}
where  the second term is the power spectrum of the noise $\varepsilon_j$. Therefore, we estimate the power spectrum of $x$ as:
\begin{align}
	{\hat{P}_x}[k] & = \frac{1}{M}\sum_{j=1}^M(P_{\xi_j}[k]-N\sigma^2).
	\label{eq:ps_estimator}
\end{align} 
It can be shown that ${\hat{P}_x}$ is unbiased and its variance is dominated by $\frac{\sigma^4}{M}$ {for large $\sigma$}. Hence, $\hat{P}_x\rightarrow P_x$ as $M\rightarrow\infty$. In particular, accurate estimation of the power spectrum requires $M$ to scale like $\sigma^4$. In the sequel, we assume that $\sigma$ is known.\footnote{If $\sigma$ is not known, it can be estimated from the data as
\begin{align*}
	\hat{\sigma}^2 & = \frac{1}{N} .\operatorname{variance}\left( \sum_{n=0}^{N-1} \xi_j[n] \right)_{j=1,\ldots,M}.
\end{align*} }

Recovering a signal from its power spectrum  is commonly referred to as \emph{phase retrieval}. This problem received  considerable attention in recent years, see for instance \cite{fienup1982phase,shechtman2015phase,jaganathan2013sparse,bendory2016non,beinert2015ambiguities,bendory2017fourier,bendory2017signal}.
It is well known that almost no one-dimensional signal can be determined uniquely from its power spectrum. Therefore, we use the power spectrum merely to estimate the signal's Fourier magnitudes. As  explained next, we  use the auto-correlation of third order and its Fourier transform, the bispectrum, to estimate the Fourier phases. 


Since phase retrieval is in general ill posed, we use the auto-correlation function of order three (that is, $c_3$ in~\eqref{eq:ac}) through its Fourier transform, the bispectrum, to estimate the Fourier phases of the sought signal. The bispectrum is a function of two frequencies $k_1,k_2=0,\dots,N-1$ and is defined as~\cite{tukey1953}:
\begin{equation} \label{eq:bispec}
B_x[k_1,k_2]=y[k_1]\overline{y[k_2]}y[k_2-k_1].
\end{equation}
Note that, if $y[0]\neq 0 $, the power spectrum is explicitly included in the bispectrum since $P_x[k]={B_x[k,k]}/{y[0]}$. The fact that the bispectrum is invariant under shifts can also be deduced from~\eqref{eq:shiftDFT}. Indeed, for any shift $R_s$,
\begin{align*}
	B_{R_s x}[k_1,k_2] & = \left( y[k_1]e^{-2\pi i k_1 s/N}\right)\left( \overline{y[k_2]}e^{2\pi i k_2 s /N}\right) \\
	& \quad \cdot \left( y[k_2 - k_1]e^{2\pi i (k_1-k_2) s /N}\right) \\
	& = B_{x}[k_1,k_2]. 
\end{align*}
In matrix notation, we express this as
\begin{align}
	B_x = yy^* \circ T(y),
	\label{eq:bispectrum_matrix}
\end{align}
where $T(y)$ is a circulant matrix whose first row is $y$, that is, $T(y)[k_1,k_2]=y[k_2-k_1]$.   Observe that if $x$ is real, then $y[k]=\overline{y[-k]}$ so that $T(y)$ and $B_x$  are Hermitian matrices.
{Simple expectation calculations lead to the conclusion that
	\begin{align}
		\mathbb{E} \left\{ B_{\xi_j}\right\} = B_{x}+ \sigma^2 N^2 \mu_x A,
		\label{eq:bispectrum_estimation}
	\end{align} 
    where $A=A_\mathbb{R}$ or $A=A_\mathbb{C}$ depending on $x\in\RN$ or $x\in\CN$ and
	\begin{equation*}
	\setlength\arraycolsep{4pt}
	A_\mathbb{R} = \begin{bmatrix}
	3 & 1 & 1 &1 & \ldots &1 \\
	1 & 1 & 0 &0 & \ldots &0 \\
	1 & 0 & 1 &0 & \ldots &0 \\
	1 & 0 & 0 &1 & \ldots &0 \\
	\vdots & \vdots & \vdots & \vdots & \ddots &\vdots \\
	1 & 0 & 0 &0  & \ldots &1 \\
	\end{bmatrix}, \ A_\mathbb{C} = \begin{bmatrix}
	2 & 1 & 1 &1 & \ldots &1 \\
	0 & 1 & 0 &0 & \ldots &0 \\
	0 & 0 & 1 &0 & \ldots &0 \\
	0 & 0 & 0 &1 & \ldots &0 \\
	\vdots & \vdots & \vdots & \vdots & \ddots &\vdots \\
	0 & 0 & 0 &0  & \ldots &1 \\
	\end{bmatrix}.
	\end{equation*}
Since the bias term is proportional to $\mu_x$, we propose} to estimate ${\hat{B}_{x-\mu_x}}$  by averaging over $B_{\xi_j-\mu_x}$ for all $j$. This estimator is  unbiased and its variance is controlled by $\frac{\sigma^6}{M}$ {for large $\sigma$}. Therefore, $M$ is required to scale like $\sigma^6$ to ensure accurate estimation. In practice, $\mu_x$ is not known exactly. Thus, we estimate the bispectrum by
\begin{align} 
	{\hat{B}_{x-\mu_x}}=\frac{1}{M}\sum_{j=1}^MB_{\xi_j-\hat{\mu}_x},
	\label{eq:bispec_estimator}
\end{align}
which is asymptotically unbiased. For finite $M$ and large $\sigma$, bias induced by the approximation $\hat{\mu}_x \approx \mu_x$ is significantly smaller than the standard deviation of~\eqref{eq:bispec_estimator}.

The bispectrum contains information about the Fourier phases of $x$ because, defining $\tilde{y}[k]= \sign(y[k])$ and $\tilde{B}_x[k_1,k_2]=\sign({B}_x[k_1,k_2])$ where $\sign$ extracts the phase of a complex number (and returns 0 if that number is 0), we have
\begin{equation} \label{eq:normalized_bispectrum}
	\tilde{B}_x[k_1,k_2] = \tilde{y}[k_1]\overline{\tilde{y}[k_2]}\tilde{y}[k_2-k_1].
\end{equation}
In matrix notation, the normalized bispectrum takes the form $\tilde{B}_x = \tilde{y}\tilde{y}^* \circ T(\tilde{y})$.

Contrary to the power spectrum, the bispectrum is usually invertible. Indeed,  in the absence of noise, the bispectrum determines the sought signal uniquely under moderate conditions:
\begin{proposition} \label{prop:uniqueness}
 For {$N \geq 5$}, let $x\in\CN$ be a signal whose DFT $y$ obeys $y[k]\neq 0 $ for $k= 1, \dots, K$, possibly also for $k = 0$, and zero otherwise.
 Up to integer time shifts, $x$ is determined exactly by its bispectrum provided $K \geq \frac{N+1}{2}$.

For $N \geq 5$, let $x\in\RN$ be a real signal whose DFT $y$ obeys $y[k]\neq 0$ for $k=1, \ldots, K$ and  $k= N-1, \ldots, N-K $, possibly also for $k = 0$, and zero otherwise. Up to integer time shifts, $x$ is determined exactly by its bispectrum provided $\frac{N}{3} \leq K \leq \frac{N-1}{2}$. 
\end{proposition}
\begin{proof}
	This  is a direct corollary of Lemmas \ref{lem:uniquness} and \ref{lem:uniqueness_real}. 
\end{proof}
{We stress that the bispectrum estimator in~\eqref{eq:bispec_estimator} is not a bispectrum itself, since the set of bispectra is not a linear space: $\hat B_{x - \mu_x}$ is not invertible as such~\cite{marabini1994practical}. Algorithms we propose aim to find a \emph{stable} inverse, in the sense that the recovered signal will have a bispectrum which is close to the estimated bispectrum in $\CNN$. The following propositions combined argue formally that this can be done in the MRA model. The proofs in Appendix~\ref{sec:sensitivity} are constructive.}
{
\begin{proposition}[Stable bispectrum inversion] \label{prop:sensitivity}
	There exists an estimator $\hat x$ with the following property. For any signal $x$ in $\RN$ or $\CN$ whose DFT is non-vanishing,
	there exist a precision $\delta = \delta(x) > 0$ and a sensitivity $L = L(x) < \infty$ such that
	if an estimator $\hat B_x$ of $B_x$ satisfies $\|\hat B_x - B_x\|_\mathrm{F} \leq \delta$, then
	$\hat x = \hat x(\hat B_x)$ satisfies $\min_{r = 0\ldots N-1} \|x - R_r \hat x\|_2 \leq L \|\hat B_x - B_x\|_\mathrm{F}$.
\end{proposition}
}
{
\begin{proposition}[Bispectrum estimation] \label{prop:bispectrumestimation}
For any signal $x$ in $\RN$ or $\CN$ whose DFT is non-vanishing, for any required precision $\delta > 0$ and for any probability $p < 1$, there exists a constant $C = C(x, p, \delta) < \infty$ such that, for any noise level $\sigma > 0$, if the number of observations $M$ exceeds $C \cdot (\sigma^2 + \sigma^6)$, the estimator
$$
	\hat B_x = \frac{1}{M} \sum_{j = 1}^M B_{\xi_j} - \sigma^2 N^2 \hat \mu_x A
$$
satisfies $\|\hat B_x - B_x\|_\mathrm{F} \leq \delta$ with probability at least $p$.
\end{proposition}
}


We mention that uniqueness in the continuous setup was considered in \cite{yellott1992uniqueness}. The more general setting of bispectrum over compact groups was considered in \cite{kondor2007novel,kakarala2009completeness,kakarala2012bispectrum,kakarala2009bispectrum}.

The MRA model here assumes  i.i.d.\ Gaussian noise.  However, the estimation is performed by averaging in the bispectrum domain, where noise affecting individual entries is correlated. Consequently, one may want to use a more robust estimator, such as the median. Yet, computing the median of complex matrices is computationally expensive, while computing the average can be performed efficiently and on-the-fly, that is, without requiring to store all observations.
For Gaussian noise, we have noticed numerically that using the mean or the median for bispectrum estimation leads to comparable estimation errors (experiments not shown).
In other noise models, e.g., with outliers, it might be useful to consider the median or the median of means method, see for instance \cite{devroye2016sub}.

Figure \ref{fig:bispectrum_averaging} presents the relative estimation error of the power spectrum and bispectrum as a function of the  number of observations $M$. For the bispectrum, the relative error is computed as
\begin{align*}
	\textrm{relative error} & := \frac{\|{B_{x-\mu_x}}-\frac{1}{M}\sum_{j=1}^MB_{\xi_j-\hat{\mu}_x}\|_{\mathrm{F}}}{\|{B_{x-\mu_x}}\|_\mathrm{F}},
\end{align*}
and similarly for the power spectrum.
 As expected, the slope of all curves is approximately $1/2$ in logarithmic scale, implying that the estimation error decreases as $O(1/\sqrt{M})$. The invariant features approach for MRA is summarized in Algorithm \ref{alg:MRA}.

\begin{algorithm}
	\textbf{Input:} Set of observations $\xi_j,\thinspace j=1,\dots,M$ according to~\eqref{eq:MRA}  and noise level $\sigma$\\
	\textbf{Output:} $\hat{x}$: estimation of $x$ \\
	\textbf{Estimate invariant features:}
	\begin{enumerate}
		\item Compute $\hat{\mu}_x$ according to \eqref{eq:est_mean_x}
		\item Compute $\hat{P}_x$ according to \eqref{eq:ps_estimator}
		\item Compute $\hat{B}_{x-\mu_x}$ according to \eqref{eq:bispec_estimator}				
	\end{enumerate}
	\textbf{Estimate the signal's DFT:}
\begin{enumerate}
	\item Estimate $y[0]$ from $\hat\mu_x$. For other frequencies:
	\item Estimate the magnitudes of $y$ from $\hat{P}_x$
    \item Estimate the phases of $y$ from $\hat{B}_{x-\mu_x}$ (e.g., Algorithm~\ref{alg:GD})
\end{enumerate}
	\textbf{Return:} $\hat{x}$: inverse DFT of the estimated $y$ 
	\protect\caption{\label{alg:MRA} Outline of the invariant approach for MRA}
\end{algorithm}

%
%
%
%
%
\begin{figure*}
	\centering
	\includegraphics[width=\linewidth]{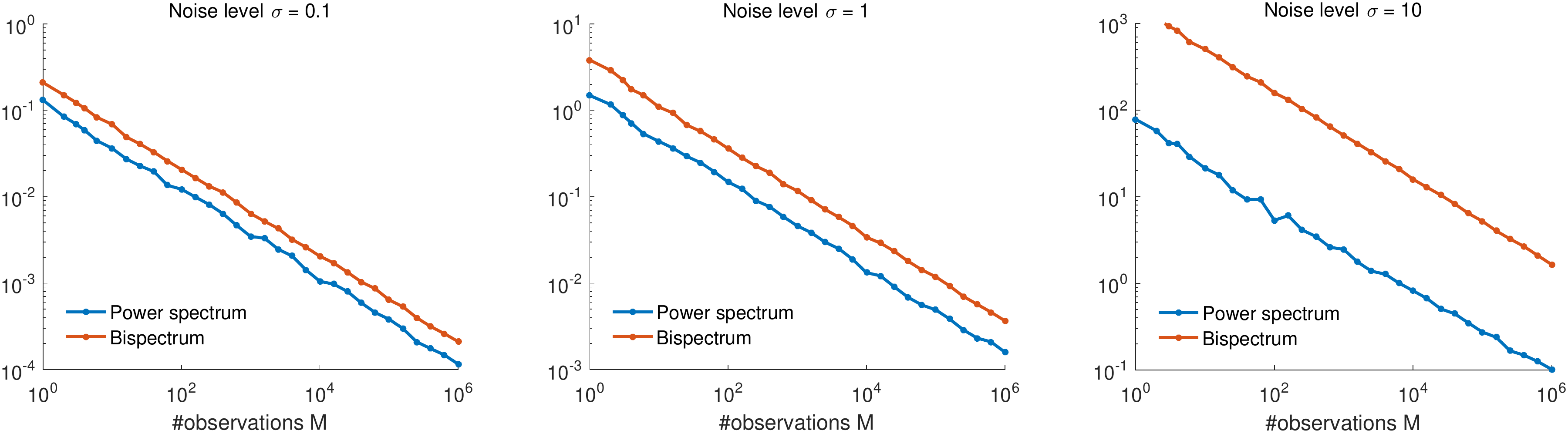}
	\protect\caption{\label{fig:bispectrum_averaging} Relative error of estimating the power spectrum and bispectrum  for different noise levels as a function of the number of observations $M$. Results are averaged over 10 repetitions for each value of $M$ on a fixed real signal of length $N = 41$ with i.i.d.\ normal random entries. {The signal-to-noise ratio is then $1/\sigma^2$.} Importantly, the relative error decreases as $1/\sqrt{M}$ regardless of noise level.}
\end{figure*}

Consider the case in which  the number of samples $M$ may be very large whereas the size of the object is fixed, namely  $N \ll M$. This case is of interest in many applications, such as  cryo-EM~\cite{frank2006three,van2000single}. In this regime, the invariant features approach has two important advantages  over methods that rely on estimating the translations.
First, in the invariant features approach, we  average over the $M$ observations (which is computationally cheap), and then apply a more complex algorithm (say, to recover a signal from its bispectrum) whose input size is a function of $N$ but is independent of $M$. Hence, the overall complexity of this approach can be relatively low. Second, the alignment-based method requires storing all $M$ observations, namely, $MN$ samples, which is unnecessary in the invariant features approach. There, for each observation, we just need to compute its invariants, to be averaged over all observations: this can be done online (in streaming mode) and in parallel.



\section{Non-convex Algorithms for bispectrum inversion} \label{sec:proposed_algorithm}

After estimating the first Fourier coefficient $y[0]$ as $N \hat \mu_x$, our approach for MRA by invariant features consists of two parts. We use the power spectrum to estimate the signal's Fourier magnitudes and the bispectrum for the phases. The first part is straightforward: $|y[k]|$ can be estimated as $\sqrt{\hat P_x[k]}$ if $\hat P_x[k] \geq 0$, and as 0 otherwise. Hereafter, we focus on estimation of the phases of the DFT, $\tilde y$.

In the literature, two main approaches were suggested to invert the discrete bispectrum. The first is based on estimating the frequencies one after the other by exploiting simple algebraic relations \cite{giannakis1989signal,sadler1992shift}. The second approach suggests to estimate the signal by least-squares solution and phase unwrapping \cite{sadler1992shift,matsuoka1984phase}.   
We improve these methods and suggest a few new algorithms. The algorithms are split into two sections. This section is devoted to {two new non-convex algorithms  based on optimization on the manifold of phases}. Both of these algorithms require initialization. While experimentally it appears that random initialization works well, for completeness, in the next section we propose three additional algorithms {which do not need initialization and hence could be used to initialize the non-convex algorithms. }

\subsection{Local non-convex algorithm over the manifold of phases} \label{sec:GD}

In this section, similarly to~\eqref{eq:normalized_bispectrum}, we let $\tilde B$ denote our estimate of the phases of $B_{x}$. Since $\tilde B \approx \tilde y \tilde y^* \circ T(\tilde y)$, one way to model recovery of the Fourier phases $\tilde y$ is by means of the non-convex least-squares optimization problem
\begin{align} 
	\min_{z \in \CN} \left\|W \circ \left(\tilde B - zz^* \circ T(z) \right) \right\|_\mathrm{F}^2 \textrm{ subject to } |z[k]| = 1, \forall k.
	\label{eq:min_fro_manopt}
\end{align}
The matrix $W \in \RNN$ is a weight matrix with nonnegative entries. These weights can be used to indicate our confidence in each entry of $\tilde{B}$.
 Expanding the squared Frobenius norm yields
\begin{align*}
&\left\|W \circ \left(\tilde B - zz^* \circ T(z) \right) \right\|_\mathrm{F}^2 =\|W \circ \tilde B\|_\mathrm{F}^2 \\  & \quad + \left\|W \circ zz^* \circ T(z) \right\|_\mathrm{F}^2  - 2\inner{W \circ \tilde B}{W \circ zz^* \circ T(z)},
\end{align*}
where
\begin{align}
	\inner{U}{V} & = \Re\{\trace(U^*V)\},
	\label{eq:innerproduct}
\end{align}
is the real inner product associated to the Frobenius norm. Under the constraints on $z$, the first two terms are constant and the inner product term is equivalent to
\begin{align*}
	\inner{W \circ \tilde B}{W \circ zz^* \circ T(z)}  = \inner{z}{M(z)z}, 
\end{align*}
 with 
 \begin{align}
 M(z) := W^{(2)} \circ \tilde B \circ \overline{T(z)},	
 \label{eq:Mdef}
 \end{align}	
where we use the notation $W^{(2)} := W \circ W$.
{One possibility is to choose $W = \sqrt{| \hat B_{x-\mu_x} |}$, where the absolute value and the square root are taken entry-wise, so that $M(z) = \hat B_{x-\mu_x} \circ \overline{T(z)}$.}
Hence, optimization problem~\eqref{eq:min_fro_manopt} is equivalent to
\begin{align}
	\max_{z \in \CN} f(z) = \inner{z}{M(z)z} \textrm{ subject to } & |z[k]| = 1, \forall k.	
	\label{eq:maxproblem}
\end{align}
We can also impose $z[0]= \sign(\hat{\mu}_x)$.
 If $x$ is real,  we have the additional symmetry constraints $z[k]=\overline{z[-k]}$.

Since the cost function $f$ is continuous and the search space is compact, a solution exists. 
Of course,
the solution is not unique, in accordance with the invariance of the bispectrum under integer time-shifts of the underlying discrete signal. This is apparent through the fact that the cost function $f$ is invariant under the corresponding (discrete) transformations of $z$. This is true independently of the data $\tilde B$ and $W$. The proof is in Appendix~\ref{sec:proof_lem_manopt_invariance}. 
\begin{lem} \label{lem:manopt_invariance}
	The cost function $f$ is invariant under transformations of $z$ that correspond to integer time-shifts of the underlying signal.
	\end{lem}
 
 To solve this non-convex program, we use the  Riemannian trust-region method (RTR)~\cite{genrtr}, whose usage is simplified by the toolbox Manopt~\cite{manopt}. RTR enjoys global convergence to second-order critical points, that is, points which satisfy first- and second-order necessary optimality conditions~\cite{boumal2016globalrates} and a quadratic local convergence rate.
Empirically, in the noiseless case it appears that the algorithm recovers the target signal with \emph{random initialization}, all local minima are global (with minor technicality for even $N$ in the real case) and all 
second-order critical points have an escape direction, that is, saddles are ``strict''.  Numerical experiments demonstrate reasonable robustness in the face of noise.
This algorithm is summarized in Algorithm \ref{alg:GD} and studied in detail in Section~\ref{sec:optimization_over_phases}. 

\begin{algorithm}
\textbf{Input:} The normalized bispectrum $\tilde{B}[k_1,k_2]$  and a weight matrix $W \in \RNN$ \\
\textbf{Output:} ${\hat{y}}$: an estimation of ${\tilde{y}}$ \\
\textbf{Compute:} Using RTR~\cite{genrtr,manopt}, approximately solve:
\begin{equation*}
\begin{split}
\hat{y}=\arg\max_{z\in\mathbb{C}^N} \Re \left\{z^*M(z)z\right\} \quad\mbox{subject to} \quad  &\vert z[k]\vert =1  ,\thinspace \forall k,\\
\textrm{(if $x$ is real)   }& z[k] = \overline{z[-k]}   ,\thinspace \forall k,
\end{split}
\end{equation*}
where $M(z):=(W\circ W) \circ \tilde B \circ \overline{T(z)}$.
\protect\caption{Non-convex optimization on phase manifold}
\label{alg:GD}
\end{algorithm}


\subsection{Iterative phase synchronization algorithm} \label{sec:APS}

In this section we present an alternative heuristic to the non-convex algorithm on the manifold of phases. This algorithm is based on iteratively solving the phase synchronization problem. 
Suppose we get an estimation of $\tilde{y}$, say $\hat{y}_{k-1}$. If $\hat{y}_{k-1}\approx\tilde{y}$ is non-vanishing, then
this estimation should approximately satisfy the bispectrum relation:
\begin{equation*}
\tilde{B}\circ \overline{T(\hat{y}_{k-1})} \approx \hat{y}_{k-1}^{} \hat{y}_{k-1}^*.
\end{equation*}
The underlying idea is now to push the current estimation towards $\tilde{y}$ by finding a rank-one approximation of  $\tilde{B}\circ \overline{T(\hat{y}_{k-1})}$ with unit modulus entries. This problem can be  formulated as:
\begin{equation} \label{eq:aps}
\arg\max_{z\in\mathbb{C}^N}\Re\left\{z^*\left(\tilde{B}\circ \overline{T(\hat{y}_{k-1})}\right)z\right\} \textrm{   subject to   } \vert z[\ell]\vert=1, \thinspace \forall \ell,
\end{equation}
where we treat the matrix $\tilde{B}\circ \overline{T(\hat{y}_{k-1})}$ as a constant. 
This problem is called phase synchronization. Many algorithms have been suggested to solve the phase synchronization problem. Among them are the eigenvector method, SDP relaxation, projected power method, Riemannian optimization and approximate message passing \cite{singer2011angular,bandeira2014tightness,boumal2016nonconvex,perry2016message,chen2016projected}. Notice that the solution of~\eqref{eq:aps} is only defined up to a global phase, namely, if $z$ is optimal, then so is $ze^{i\phi}$ for any angle $\phi$. To resolve this ambiguity, we require knowledge of the phase of the mean, $\tilde y[0]$ (which is easy to estimate from the data) and we pick the global phase of $\hat y_k$ such that $\hat y_k[0] = \tilde y[0]$.

The $k$th iteration of our algorithm thus (tries to) solve the phase synchronization problem with respect to the matrix $M_{k-1}:=\tilde{B}\circ \overline{T(\hat y_{k-1})}$, where $\hat y_{k-1}$ is the solution of the previous estimation. Assuming the signal is real, we also impose at each iteration the  conjugate-reflection property of $\hat y_k[\ell]=\overline{\hat y_k[-\ell]}$ for all $\ell$ so that $M_{k}$ is Hermitian. 
In the numerical experiments in Section \ref{sec:numerical_exp}, we solve~\eqref{eq:aps} by the Riemannian trust-region method described in~\cite{boumal2016nonconvex}. {Empirically, the performance of this algorithm and Algorithm~\ref{alg:GD} is indistinguishable}. The algorithm is summarized in Algorithm~\ref{alg:APS}.

%

\begin{algorithm}
	\textbf{Input:} The normalized bispectrum $\tilde{B}$, initial estimation $\hat y_0$, phase of the mean $\tilde y[0]$\\
	\textbf{Output:} ${\hat{y}}$: estimation of $\tilde{y}$\\
	Set $k=0$\\
	\textbf{while} stopping criterion does not trigger \textbf{do}:
	\begin{itemize}
		\item[-] $k\leftarrow k+1$
		\item[-] Compute $\hat{y}_k$ as a solution of \eqref{eq:aps}
		\item[-] Fix the global phase: $\hat{y}_k \leftarrow \hat{y}_k \cdot \frac{\tilde y[0]}{\hat y_k[0]}$
		\item[-] If $x$ is real, symmetrize: $\hat{y}_k \leftarrow \sign\left( (\hat{y}_k)_{\downarrow \uparrow}\right)$, see~\eqref{eq:downup}
	\end{itemize}
	\textbf{end while} \\
	\textbf{Return:} $\hat{y} \leftarrow \hat{y}_k$ 
	\protect\caption{Iterative phase synchronization algorithm}
	\label{alg:APS}
\end{algorithm}

\section{ Initialization-free Algorithms } \label{sec:algorithms}

The previous section was devoted to  non-convex algorithms to invert the bispectrum. In this section we present three additional algorithms based on frequency marching (FM), SDP relaxation and phase unwrapping. {These algorithms do not require initialization and therefore could be used to initialize the non-convex algorithms.}

We  prove that FM and the SDP recover the Fourier phases exactly in the absence of noise under the assumption that we can fix $\tilde{y}[1]$. {If the signal has non-vanishing DFT, $\tilde{y}[1]$ can be estimated from the bispectrum using the fact that $\tilde {y}[1]^N$ equals
\begin{align*}
	\sign\left( B_x[N-1, 1] B_x[1, 2] \cdot B_x[1, 2] \cdots  B_x[1, N-1] \right).
\end{align*}
(Any $N$th root can be used for $\tilde{y}[1]$, corresponding to the $N$ possible shifts of $x$.) In all cases, we argue that forcing $\tilde{y}[1] = 1$ is acceptable if $N$ is large. Indeed, recall that a shift by $\ell$ entries in the space domain is equivalent to modulating the $k$th Fourier coefficient by $e^{-2\pi i \ell k/N}$. In particular, it means that
the phase $\tilde y[1]$ can be shifted by $e^{-2\pi i \ell/N}$ for an arbitrary $\ell\in\mathbb{Z}$}. Thus, for signals of length $N \gg 1$, the phase $\tilde y[1]$ can be set arbitrarily with only small error.
In the numerical experiments of Section~\ref{sec:numerical_exp}, we give the correct value of $\tilde y[1]$ to the algorithms in order to assess their best possible behavior.

We begin by discussing the FM algorithm, which is a simple propagation method: it is exact in the absence of noise. Notwithstanding, its estimation for the low-frequency coefficients is sensitive to noise. Because of its recursive nature, error in the low frequencies propagates to the high frequencies, resulting in unreliable estimation. The other two algorithms are more computationally demanding but appear more robust.

\subsection{Frequency marching algorithm} \label{sec:FM}
The FM algorithm is a simple propagation algorithm in the spirit of \cite{giannakis1989signal,sadler1992shift} that aims to estimate  $\tilde{y}$ one frequency at a time. This algorithm has computational complexity $O(N^2)$ and it recovers $\tilde{y}$ exactly for both real and complex signals in the absence of noise, assuming $\tilde{y}[1]$ is known.

Let us denote $\tilde{B}[k_1,k_2]=e^{i\Psi[k_1,k_2]}$ and $\tilde{y}[k]=e^{i\psi[k]}$. Accordingly, we can reformulate (\ref{eq:bispec}) as
\begin{equation*}
\Psi[k_1,k_2]=\psi[k_1]-\psi[k_2]+\psi[k_2-k_1] \bmod 2\pi,
\end{equation*}
where the modulo is taken over the sum of all three terms. 
Using this relation, we can start to estimate the missing phases. The first unknown phase, $\psi[2]$, can  be estimated by:
\begin{equation*}
\begin{split}
\Psi[1,2]&=\psi[1]-\psi[2]+\psi[1] \bmod 2\pi \\ \Rightarrow\quad \hat{\psi}[2]& = 2\psi[1]-\Psi[1,2] \bmod 2\pi,
\end{split}
\end{equation*}
where $\hat{\psi}[2]$ refers to the estimator of $\psi[2]$ (defined modulo $2\pi$). 
We can estimate the next phase in the same manner:
\begin{equation*}
\begin{split}
\Psi[1,3]&=\psi[1]-\psi[3]+\psi[2] \bmod 2\pi\\\Rightarrow\quad \hat{\psi}[3]&={\psi[1]} + \hat{\psi}[2]-\Psi[1,3] \bmod 2\pi.
\end{split}
\end{equation*}

For higher frequencies, we have more measurements to rely on. For the fourth entry, we now can derive two estimators as follows:
\begin{equation*}
\begin{split}
\Psi[1,4]&=\psi[1]-\psi[4]+\psi[3] \bmod 2\pi \\\Rightarrow\quad \hat{\psi}^{(1)}[4]&={\psi[1]} + \hat{\psi}[3]-\Psi[1,4] \bmod 2\pi,
\end{split}
\end{equation*}
and 
\begin{equation*}
\begin{split}
\Psi[2,4]&=\psi[2]-\psi[4]+\psi[2] \bmod 2\pi \\\Rightarrow\quad \hat{\psi}^{(2)}[4]&=2 \hat{\psi}[2]-\Psi[2,4] \bmod 2\pi.
\end{split}
\end{equation*}

In the noiseless case, it is clear that ${\psi}[4]=\hat{\psi}^{(1)}[4]=\hat{\psi}^{(2)}[4]$. In a noisy environment, we can reduce the noise by averaging the two  estimators, where averaging is done over the set of phases (namely, over the rotation group SO(2)) as explained in Appendix \ref{sec:averaging_so2}. Specifically,  
\[
e^{i\hat{\psi}[4]} = \sign\left(e^{i\hat{\psi}^{(1)}[4]}+e^{i\hat{\psi}^{(2)}[4]}\right). 
\]
We can iterate this procedure. To estimate phase $q$, we want to consider all entries of $\tilde B[k, \ell] = \tilde y[k] \overline{\tilde y[\ell]} \tilde y[\ell -k]$ such that exactly one of the indices $k, \ell, $ or $\ell-k$ is equal to $q$ and all other indices are in $1, \ldots, q-1$, so that all other phases involved have already been estimated. A simple verification shows that only entries $\tilde B[p, q],\thinspace p=1,\dots,q-1,$ have that property. Furthermore, because of symmetry in the bispectrum~\eqref{eq:symmetry_prop}, half of these entries are redundant so that only entries $\tilde B[p, q],\thinspace p=1,\dots,\left\lfloor \frac{q}{2}\right\rfloor$ remain. As a result, estimation of the $k$th phase relies on averaging over $\left\lfloor \frac{k}{2}\right\rfloor$ equations, as summarized in Algorithm~\ref{alg:FM}, with the following simple guarantee. {The above construction yields the following proposition.}
\begin{proposition} \label{prop:FM_recovery}
	Let $\tilde{B} = \tilde{B}_x$ be the normalized bispectrum as defined in (\ref{eq:normalized_bispectrum}) and assume  that $\tilde{y}[1]$ {is known}. If $y[k]\neq 0$ for $k=1,\dots,K$, then Algorithm \ref{alg:FM} recovers 
	the Fourier phases $\tilde{y}[k],\thinspace k=1,\dots,K$ exactly.
\end{proposition}
We note in closing that, if the signal $x$ is real, symmetries in the phases $\tilde y$ and $\tilde B$ can be exploited easily in FM.

\begin{algorithm}
	\textbf{Input:} Normalized bispectrum ${\tilde{B}}[k_1,k_2]=e^{i\Psi[k_1,k_2]}$, {$\tilde y[0]$ and $\tilde y[1]\neq 0$}\\
	\textbf{Output:} $\hat{y}$: estimation of $\tilde{y}$ 
	\begin{enumerate}
		\item Set { $\hat{y}[0] = \tilde y[0]$  and $e^{i\hat{\psi}[1]} = \tilde y[1]$ } 
		\item \textbf{For} $k=2,\dots,N$ \textbf{do}: 
		\begin{enumerate}
			\item Average the phase measurements:
			\begin{align*}
				u = \sign\left( \sum_{\ell = 1}^{\left\lfloor\frac{k}{2}\right\rfloor} e^{i\left( \hat{\psi}[\ell]+\hat{\psi}[k-\ell]-\Psi[\ell,k] \right)}  \right)
			\end{align*}
			\item Estimate $\hat{\psi}[k]$ through:
			\begin{equation*} \label{eq:fm_est}
			e^{i\hat{\psi}[k]} = \begin{cases}
			u,& \quad u\neq 0, \\ 1,& \quad u=0.
			\end{cases} 
			\end{equation*}  
		\end{enumerate}
	\end{enumerate}
	\textbf{Return:} $\hat{y} \leftarrow e^{i\hat{\psi}}$
	\protect\caption{\label{alg:FM} Frequency marching algorithm}
\end{algorithm}


\subsection{Semidefinite programming  relaxation} \label{sec:sdp}

In this section we  assume  that the DFT $y$ is non-vanishing so that  the bispectrum relation can be manipulated as 
\begin{align*}
	\tilde{B} & = \tilde{y}\tilde{y}^{*}\circ T(\tilde{y}) & \equiv & & \tilde{B}\circ \overline{T(\tilde{y})} & = \tilde{y}\tilde{y}^{*}, 
\end{align*}
where $\overline{T(\tilde{y})}$ is its entry-wise conjugate.  The developments are easily adapted if the signal has zero mean.
Similarly to the FM algorithm, we assume that $\tilde{y}[0]$ and $\tilde{y}[1]$ are available. 
We aim to estimate $\tilde{y}$ by a convex program. As a first step, we decouple the bispectrum equation and  write the problem of estimating  $\tilde{y}$ as the following non-convex optimization problem:  
\begin{equation} \label{eq:non_convex_sdp}
\begin{split}
\min_{{Z}\in\mathcal{H}^{N},z\in\mathbb{C}^{N}}&\left \|W\circ\left(\tilde{B}\circ\overline{T(z)}-Z\right)\right\|_{\textrm{F}}^2 \\ \mbox{subject to } &  Z= zz^{*},\\& \mbox{diag}\left(Z\right)=1, \\& z[0]=\tilde{y}[0], \thinspace z[1]=\tilde{y}[1],\\ \textrm{(if $x$ is real)\thinspace} &   
z[k]=\overline{z[-k]}, \thinspace \forall k,
\end{split}
\end{equation}
where $\mathcal{H}^{N}$ is the set of Hermitian matrices of size $N$ and $W \in \RNN$ is a real weight matrix with {positive} entries. In particular, in the numerical experiments we set $W=\vert B\vert$.

In the absence of noise, the minimizers of (\ref{eq:non_convex_sdp})  satisfy the bispectrum equation. However, in general these cannot be computed in polynomial time. In order to make the problem tractable, we relax the non-convex coupling constraint $Z = zz^*$ to the convex constraint $Z \succeq zz^*$ (that is, $Z-zz^*$ is positive semidefinite). The convex relaxation is then given by 
\begin{equation} \label{eq:sdp_relax_noisy}
\begin{split}
\min_{Z\in\mathcal{H}^{N},z\in\mathbb{C}^{N}}&\left\|W\circ\left(\tilde{B}\circ\overline{T(z)}-Z\right)\right\|_{\textrm{F}}^2 \\ \mbox{subject to } &   Z\succeq zz^{*},\\& \mbox{diag}\left(Z\right)=1, \\&z[0]=\tilde{y}[0], \thinspace z[1]=\tilde{y}[1],, \\ \textrm{(if $x$ is real)\thinspace} & 
z[k]=\overline{z[-k]}, \thinspace \forall k.
\end{split}
\end{equation}
Upon solving~\eqref{eq:sdp_relax_noisy}, which can be done in polynomial time with interior point methods, the phases $\tilde y$ are estimated from $\sign(z)$. In practice, we use CVX to solve this problem~\cite{grant2008cvx}. The algorithm is summarized in Algorithm \ref{alg:SDP}. We note that problem~\eqref{eq:sdp_relax_noisy} is not a standard SDP, in that its cost function is nonlinear.

\begin{algorithm}
\textbf{Input:} The normalized bispectrum $\tilde{B}$, $\tilde y[0]$ and $\tilde y[1]$  \\
\textbf{Output:} $\hat{y}$: estimation of $\tilde{y}$  \\
\textbf{Solve} the SDP with nonlinear cost function~\eqref{eq:sdp_relax_noisy}, for example using CVX~\cite{grant2008cvx} \\
\textbf{Return:} $\hat{y}\leftarrow \sign(z)$ 

\protect\caption{\label{alg:SDP} Semidefinite relaxation algorithm }
\end{algorithm}

In the noiseless case, the SDP relaxation \eqref{eq:sdp_relax_noisy} recovers the missing phases exactly. Interestingly, the proof is not so much based on optimality conditions as it is on an algebraic property of circulant matrices. The proof of the following property is given in Appendix~\ref{sec:proof_lemma_LA}. 
\begin{lem}\label{lem:LA} Let $\hat{u}$ be the DFT of a vector
$u\in\mathbb{C}^{N}$ obeying $u[k]=\overline{u[-k]}$, so that $\hat{u}$ is real. If $u[0]=u[1]=1$ and $\hat{u}$ is non-negative, 
then $u[k]=1$ for all $k$.
\end{lem}
The following theorem is a direct corollary of Lemma~\ref{lem:LA}. The main proof idea is as follows. Consider $u = \overline{\tilde y} \circ z$ where $(Z, z)$ is optimal for the SDP; then, the constraints ensure $u[0] = u[1] = 1$. Furthermore, one can see via the Schur complement that the constraints force $T(u)$ to be positive semidefinite. Since the eigenvalues of $T(u)$ are the DFT of $u$, it follows that $\hat u$ is non-negative, so that the lemma above applies and $u \equiv 1$, or, equivalently, $z = \tilde y$. Details of the proof are in Appendix~\ref{sec:proof_sdp}.
\begin{thm} \label{th:sdp} For a real signal with non-vanishing DFT $y$, if all weights in $W$ are positive, $\tilde{y}[0]$ and $\tilde{y}[1]$ are known and the objective value of~\eqref{eq:sdp_relax_noisy} attains 0 (which is the case in the absence of noise), then the SDP has a unique solution given by $z=\tilde{y}$ and $Z=zz^*$. 
\end{thm}
We close with an important remark about the symmetry breaking purpose of constraint $z[1] = \tilde y[1]$ in the SDP. Because the signal $x$ can be recovered only up to integer time shifts, even in the noiseless case, without this constraint there are at least $N$ distinct solutions $(z, Z)$ to the SDP. Because SDP is a convex program, any point in the convex hull of these $N$ points is also a solution. Thus, if the symmetry is not broken, the set of solutions contains many irrelevant points. Furthermore, interior point methods tend to converge to a center of the set of solutions, which in this case is never one of the desired solutions.


\subsection{Phase unwrapping by integer programming algorithm} \label{sec:integer_prog}

The next algorithm is based on solving an over-determined system of equations involving integers.
Let us denote $\tilde{y}[k]=e^{i\psi[k]}$ and $\tilde{B}[k_1,k_2]=e^{i\Psi[k_1,k_2]}$ so the normalized bispectrum model is given by
\begin{equation*}
e^{i\Psi[k_1,k_2]}=e^{i(\psi[k_1]-\psi[k_2]+\psi[k_2-k_1])}.
\end{equation*}
By taking the logarithm, we get the algebraic relation
\begin{equation} \label{eq:LLL1}
\Psi[k_1,k_2]+2\pi \chi[k_1,k_2]=\psi[k_1]-\psi[k_2]+\psi[k_2-k_1],
\end{equation}
where, as a result of phase wrapping, $\chi$ takes on integer values. Let $\Psi_{\textrm{vec}}$  and $\chi_{\textrm{vec}}$ be the column-stacked versions of $\Psi$ and $\chi$, respectively. Then, the model reads
\begin{equation} \label{eq:LS}
\Psi_{\textrm{vec}}+2\pi\chi_{\textrm{vec}}=A\psi,
\end{equation}
where the sparse matrix $A\in\mathbb{R}^{N^2\times N}$ encodes the right hand side of~\eqref{eq:LLL1}. It can be verified that $A$ is of rank $N-1$  (see for instance \cite{bendory2017on}), with null space corresponding to the time-shift-induced ambiguity on the phases~\eqref{eq:shiftDFT}.
 Note that both  the integer vector $\chi_{\textrm{vec}}$ and the phases $\psi$ are unknown. Given $\chi_{\textrm{vec}}$, the phases $\psi$ can be obtained easily by solving
\begin{equation} \label{eq:LLL_LS}
\min_{\psi\in\mathbb{R}^N }\| \Psi_{\textrm{vec}}+2\pi \chi_{\textrm{vec}} - A\psi \|_p,
\end{equation} 
for some $\ell_p$ norm. Observe that any error in estimating  $\chi$ may cause a big  estimation error of $\psi$ in \eqref{eq:LLL_LS}. These errors can be thought of as outliers. Hence, we choose to use {least unsquared deviations (LUD)}, $p=1$, which is more robust to outliers. 
The more challenging task is to estimate the integer vector $\chi_{\textrm{vec}}\in\mathbb{Z}^{N^2}$. To this end, we first eliminate  $\psi$ from \eqref{eq:LS} as follows.
Let  $C\in\mathbb{R}^{(N^2-(N-1))\times N^2}$  be a full rank matrix such that $CA=0$, that is, the columns of $C^T$ are in the null space of $A^T$. Matrix $C$ can be designed by at least two methods.
One, suggested in~\cite{marron1990unwrapping}, exploits the special structure of $A$ to design a sparse matrix composed of integer values. Another, which we use here, is to take $C$ to have orthonormal rows which form a basis of the kernel of $A^T$. Numerical experiments (not shown) indicate that the latter approach is more stable. Next, we  multiply  both sides of (\ref{eq:LS}) from the left by $C$ to get 
\begin{equation*} 
C(\Psi_{\textrm{vec}}+2\pi\chi_{\textrm{vec}})=CA\psi=0.
\end{equation*} 
Therefore, the integer recovery problem can  be formulated as 
\begin{equation} \label{eq:LS_int}
\min_{\chi_{\textrm{vec}}\in\mathbb{Z}^{N^2}} \left\| \frac{1}{2\pi} C\Psi_{\textrm{vec}} + C \chi_{\textrm{vec}} \right\|_2,
\end{equation}
where we minimize over all integers.
Note that $C\Psi_{\textrm{vec}}$ is a known vector. The problem is then equivalent to finding a lattice vector
with the basis $C$ which is as close as possible to the vector  $-C\Psi_{\textrm{vec}}/(2\pi)$. While the problem is known to be NP-hard, we approximate the solution of (\ref{eq:LS_int}) with the LLL (Lenstra--Lenstra--Lovasz) algorithm, which can be run in polynomial time~\cite{LLL}. The LLL  algorithm computes a lattice basis, called a \emph{reduced basis},  which is approximately orthogonal. It uses the Gram--Schmidt process to determine the quality of the  basis. For more details, see~\cite[Ch.~17]{galbraith2012mathematics}.

We note that (\ref{eq:LS_int}) is under-determined as the matrix $C$ is of rank $N^2-\mbox{rank}(A)=N^2-(N-1)$. While the LLL algorithm works with under-determined systems, in our case we can solve it for a determined system since we can fix the first $N-1$ entries of $\chi_{\textrm{vec}}$ to be zero.\footnote{We omit the proof of this property here and only mention that it is based on the derivation in \cite{marron1990unwrapping}.}  
Once we have estimated $\chi_{\textrm{vec}}$, we solve \eqref{eq:LLL_LS} with $p=1$. This approach is summarized in Algorithm \ref{alg:integer}.


\begin{algorithm} 
\textbf{Input:} The  normalized bispectrum $\tilde{B}[k_1,k_2]=e^{i\Psi[k_1,k_2]}$\\
\textbf{Output:} $\hat{y}$: estimation of $\tilde{y}$\\ 
\begin{enumerate}
\item (integer programming) Apply the LLL algorithm to estimate the integer vector $\chi_{\textrm{vec}}$ from  
\begin{equation*}
\min_{\chi_{\textrm{vec}}\in\mathbb{Z}^{N^2}}\|C\Psi_{\textrm{vec}}/(2\pi) + C \chi_{\textrm{vec}}\|_2,
\end{equation*}
where $A$ is given in (\ref{eq:LS}), $CA=0$  and $\Psi_{\textrm{vec}}\in\mathbb{R}^{N^2}$ is a column-stacked version of $\Psi$, e.g., using code from~\cite{chang2007miles}.
\item (least-unsquared minimization) Let $\hat{\chi}_{\textrm{vec}}$ be the solution of stage 1. Then, solve 
\begin{equation*}
\hat{\psi}=\arg\min_{\psi\in\mathbb{R}^N}\| \Psi_{\textrm{vec}}+2\pi \hat{\chi}_{\textrm{vec}} - A\psi \|_1.
\end{equation*} 
\end{enumerate}
\textbf{Return:} $\hat{y}\leftarrow e^{i\hat{\psi}}$
\protect\caption{ \label{alg:integer} Phase unwrapping by integer programming}
\end{algorithm}

\section{Analysis of optimization over phases} \label{sec:optimization_over_phases}

In this section, we study the non-convex optimization problem~\eqref{eq:maxproblem} and give more implementation details to solve it, since numerical experiments identify this as the method of choice for MRA from invariant features among all methods compared. We start by considering the general case of a complex signal $x \in \mathbb{C}^N$ and consider the real case in Appendix~\ref{sec:analysis_real}.
 Recall that we aim to maximize 
\begin{equation*}
f(z) = \inner{z}{M(z)z}, \quad M(z):=W^{(2)} \circ \tilde B \circ \overline{T(z)},  
\end{equation*}
where the inner product is defined by~\eqref{eq:innerproduct}, $W$ is a real weighting matrix and $W^{(2)} := W \circ W$. 
The optimization problem lives on a \emph{manifold}, that is, a smooth nonlinear space. Indeed, the smooth cost function $f(z)$ is to be maximized over the set
\begin{equation*} 
	\calM  =  \left\{ z \in \CN : |z[0]| = \cdots = |z[N-1]| = 1 \right\},
\end{equation*}
which is a Cartesian product of $N$ unit circles in the complex plane (a torus). Theory and algorithms for optimization on manifolds can be found in the monograph~\cite{AMS08}. We follow this formalism here. Details can also be found in~\cite{boumal2016nonconvex}, which deals with the similar problem of {phase synchronization}, using similar techniques. For the numerical experiments below, we use the toolbox Manopt which provides implementations of various optimization algorithms on manifolds~\cite{manopt}.

Under mild conditions, the  global optima of~\eqref{eq:maxproblem} correspond exactly to  $\tilde y$ up to  integer time shifts. This fact is proven in Appendix~\ref{sec:proof_uniquness}.
\begin{lem} \label{lem:uniquness}
	For $N \geq 3$, let $x\in\CN$ be a signal whose DFT $y$ is nonzero for frequencies $k$ in $\{1, \ldots, K\}$, possibly also for $k = 0$, and zero otherwise.
	Up to integer time shifts, $x$ is determined exactly by its bispectrum $B$ provided $K \geq \frac{N+1}{2}$. Furthermore, the global optima of~\eqref{eq:maxproblem} correspond exactly to the relevant phases of $y$---up to the effects of integer time shifts---provided $W[k,\ell]$ is  positive  when $B[k,\ell]\neq 0$. 
\end{lem}

The problem at hand is
\begin{align} \label{eq:maxfcalM}
	\max_{z \in \calM} f(z).
\end{align}
This is smooth but non-convex, so that in general it is hard to compute the global optimum. We derive first- and second-order necessary optimality conditions. Points which satisfy these conditions are called \emph{critical} and \emph{second-order critical} points, respectively. Known algorithms converge to critical points (e.g., Riemannian gradient descent) and even to second-order critical points (e.g., Riemannian trust-regions) regardless of initialization~\cite{AMS08,genrtr,boumal2016globalrates}. Empirically, despite non-convexity, the global optimum appears to be computable reliably in favorable noise regimes.

As we proceed to consider optimization algorithms for~\eqref{eq:maxproblem}, the gradient of $f$ will come into play: 
\begin{align*}
	\nabla f(z) & = M(z)z + M(z)^*z + M\adj(zz^*),
\end{align*}
where $M\adj \colon \CNN \to \CN$ is the adjoint of $M$ with respect to the inner product $\inner{\cdot}{\cdot}$. Formally, the adjoint is defined such that, for any $z\in\CN, X \in \CNN$,
\begin{align*}
	\inner{z}{M\adj(X)} & = \inner{M(z)}{X}.
\end{align*}
Specifically, in Appendix \ref{sec:proof_identity_Madj} we show that 
\begin{align}
M\adj(X)[k] & = \trace\left( T_k\transpose \left(W^{(2)} \circ \tilde B \circ \overline{X}\right) \right),
\label{eq:Madjk}
\end{align}
where $T_k$ is a circulant matrix with ones in its $k$th (circular) diagonal and zero otherwise, namely,
\begin{equation} \label{eq:T}
(T_k)[\ell', \ell] =  \begin{cases}
1 & \textrm{ if } \ell' = \ell-k, \\ 0 & \textrm{ otherwise}.
\end{cases}
\end{equation}

As it turns out, under the symmetries of the problem at hand, there is no need to evaluate $M\adj$ explicitly. Indeed, $\tilde B$ obeys
\begin{align} \label{eq:symmetry_prop}
\tilde B[k_2-k_1, k_2] & = \tilde y[k_2-k_1] \overline{\tilde y[k_2]} \tilde y[k_1] = \tilde B[k_1, k_2].
\end{align}
This property is preserved when $\tilde B$ is obtained by averaging bispectra of multiple observations, as in~\eqref{eq:bispec_estimator}. Assuming the same symmetry for the real weights $W$, we find below that $M\adj(zz^*) = M(z)z$. 	See Appendix~\ref{sec:proof_lemma_symmetry_Madj}.
\begin{lem}
	If
	$\tilde B[k_2-k_1,k_2] = \tilde B[k_1, k_2]$ and $W[k_2-k_1,k_2] = W[k_1, k_2]$ for all $k_1, k_2$, then $M\adj(zz^*) = M(z)z$ for all $z\in\CN$. 
	\label{lemma:symmetryMadj}
\end{lem}

Thus, under the symmetries assumed in Lemma~\ref{lemma:symmetryMadj}, the gradient of $f$ simplifies and we get a simple expression for the Hessian as well:
\begin{align} \label{eq:nablaf}
\nabla f(z) & = 2M(z)z + M(z)^*z, \\
\nabla^2 f(z)[\dot z] & = 2M(\dot z)z + 2M(z)\dot z+ M(\dot z)^*z + M(z)^*\dot z  \nonumber. 
\end{align}

For unconstrained optimization, the first-order necessary optimality conditions are $\nabla f(z) = 0$. In the presence of the constraint $z\in\mathcal{M}$, the conditions are different. Namely, following~\cite[eq.~(3.37)]{AMS08}, since $\calM$ is a submanifold of $\CN$, first-order necessary optimality conditions state that the orthogonal projection of the gradient $\nabla f(z)$ to the tangent space to $\calM$ at $z$ must vanish. The result of this projection is called the \emph{Riemannian gradient}. Formally, the tangent space is obtained by linearizing (differentiating) the constraints $|z[k]|^2 = \inner{z[k]}{z[k]} = 1$ for all $k$, yielding
\begin{align*}
	\T_z\calM & = \{ \dot z \in \CN : \inner{z[k]}{\dot z[k]} = 0, \forall k \}.
\end{align*}
Orthogonal projection of $u\in\CN$ to the tangent space $\T_z\calM$ can be computed entry-wise by subtracting from each $u[k]$ its component aligned with $z[k]$. Let $\Proj_z \colon \CN \to \T_z\calM$ denote this projection.
 This operation admits a compact matrix notation as
\begin{align*}
	 u \mapsto \Proj_z(u) & = u - \Re\{ u \circ \overline{z} \}\circ z \\
		& = u - \Re\{\ddiag(uz^*)\}z,
\end{align*}
where $\ddiag \colon \CNN \to \CNN$ sets all non-diagonal entries of a matrix to zero. Equipped with this notion and the expression for $\nabla f(z)$~\eqref{eq:nablaf}, it follows that the Riemannian gradient of $f$ at $z$ on $\calM$ is
\begin{align*}
	\grad f(z) & := \Proj_z(\nabla f(z)) 
	= \nabla f(z) - D(z)z,
\end{align*}
with
\begin{align*}
	D(z) & := \Re\{ \ddiag\left( \nabla f(z)z^* \right) \} = \Re\{ \diag\left(\nabla f(z) \circ \overline{z}\right) \}.
\end{align*}
\begin{lem}
	If $z\in\calM$ is optimal for~\eqref{eq:maxfcalM}, then $\grad f(z) = 0$; equivalently,
	$\diag(\nabla f(z) z^*) = \nabla f(z) \circ \overline{z}$
	is real. 
	\label{lem:critptz}
\end{lem}
\begin{proof}
	See~\cite[Rem.~4.2 and Cor.~4.2]{yang2012optimality}. For the equivalence, notice that $\Proj_z(u) = 0$ if and only if $u[k] = \Re\{u[k] \overline{z[k]}\} z[k]$ for all $k$, and multiply by $\overline{z[k]}$ on both sides using $|z[k]| = 1$.
\end{proof}
A point $z$ which satisfies these conditions is called a critical point.
Likewise, we can define a notion of \emph{Riemannian Hessian} as the linear, self-adjoint operator on $\T_z\calM$ which captures infinitesimal changes in the Riemannian gradient around $z$. Without getting into technical details, we follow~\cite[eq.~(5.15)]{AMS08} and define (with $\D$  the directional derivative operator):
\begin{align*}
	\Hess f(z)[\dot z] & := \Proj_z\left(\D\left( z \mapsto \grad f(z) \right)(z)[\dot z]\right) \nonumber\\
				       & = \Proj_z\left( \nabla^2 f(z)[\dot z] - D(z) \dot z - (\D D(z)[\dot z])z\right),
\end{align*}
where $\D D(z)[\dot z]$ is a real, diagonal matrix. Its contribution to the Hessian is zero, since $(\D D(z)[\dot z]) z$ vanishes under the projection $\Proj_z$. Hence,
\begin{align*}
	\Hess f(z)[\dot z] & = \Proj_z\left( \nabla^2 f(z)[\dot z] - D(z) \dot z \right). 
\end{align*}
The Riemannian Hessian intervenes in the second-order necessary optimality conditions as follows.
\begin{lem} \label{lem:secondordercritz}
	If $z\in\calM$ is optimal for~\eqref{eq:maxfcalM}, then $\grad f(z) = 0$ and $\Hess f(z) \preceq 0$, that is, for all $\dot z \in \T_z\calM$ we have 
	\begin{align*}
		 \inner{\dot z}{\Hess f(z)[\dot z]}  = \inner{\dot z}{\nabla^2 f(z)[\dot z]} - \inner{\dot z}{D(z)\dot z} \leq 0.
	\end{align*}
\end{lem}
\begin{proof}
	See~\cite[Rem.~4.2 and Cor.~4.2]{yang2012optimality}. In the equality, we used the fact that $\Proj_z$ is self-adjoint and $\dot z \in \T_z\calM$.
\end{proof}
A point $z$ which satisfies these conditions is called a \emph{second-order critical point}. With unit weights, the following lemma shows that second-order critical points $z$, in the noiseless case, cannot have an arbitrarily bad objective value $f(z)$. This result is weak, however, since empirically it is observed that in the noiseless case local optimization methods consistently converge to global optima whose value are  $N^2$, suggesting that all second-order critical points are global optima in this simplified scenario. While we do not have a proof for this stronger conjecture, we provide the lemma below because it is analogous to~\cite[Lemma~14]{boumal2016nonconvex} which, in that reference, is a key step toward proving global optimality of second-order critical points.
\begin{lem} \label{lem:second_order_critical_points}
	In the absence of noise and with unit weights, a second-order critical point $z$ of~\eqref{eq:maxfcalM} satisfies $$\nabla f(z) \circ \overline{z} \geq 2(\sqrt{3}-1) > 0.$$ In particular, this implies
	\begin{align*}
		f(z) & = \frac{1}{3} \inner{z}{\nabla f(z)} \geq \frac{2(\sqrt{3}-1)}{3}N. 
	\end{align*}
\end{lem}

\begin{proof}
See Appendix~\ref{sec:proof_lem_second_order_critical_points} for the proof of the inequality. It follows from two key considerations. First, because $z$ is a critical point, Lemma~\ref{lem:critptz} indicates that $\nabla f(z) \circ \overline{z}$ is real. Second, because $z$ is second-order critical, the Riemannian Hessian at $z$ must be negative semidefinite by Lemma~\ref{lem:secondordercritz}. Applied to all tangent directions at $z$ which perturb only one phase at a time implies the desired inequality. The fact that $3f(z) = \inner{z}{\nabla f(z)}$ follows from~\eqref{eq:nablaf}. 
\end{proof}

One final ingredient that is necessary to optimize $f$ over $\calM$ is a means of moving away from a current iterate $z\in\calM$ to the next by following a tangent vector $\dot{z}$. A simple means of achieving this is through a \emph{retraction}~\cite[Def.~4.1.1]{AMS08}. For $\calM$, an obvious retraction is the following:
\begin{align}
	\Retr_z(\dot z) & = \sign(z + \dot z) \in \calM.
	\label{eq:Retr}
\end{align}

With the formalism of~\eqref{eq:maxfcalM} and the above derivations, we can now run a local Riemannian optimization algorithm. As an example, the Riemannian gradient ascent algorithm would iterate the following:
\begin{align*}
	z^{(t+1)} & = \Retr_{z^{(t)}}\left( \eta^{(t)} \grad f(z^{(t)}) \right) \\&= \sign\left( z^{(t)} + \eta^{(t)} \grad f(z^{(t)}) \right),
\end{align*}
where $\eta^{(t)} > 0$ is an appropriately chosen step size and $z^{(0)} \in \calM$ is an initial guess. It is relatively easy to choose the step sizes such that the sequence $z^{(t)}$ converges to critical points regardless of $z^{(0)}$, with a linear local convergence rate~\cite[\S4]{AMS08}. In practice, we prefer to use  the Riemannian trust-region method (RTR)~\cite{genrtr}, whose usage is simplified by the toolbox Manopt~\cite{manopt}. RTR enjoys global convergence to second-order critical points~\cite{boumal2016globalrates} and a quadratic local convergence rate.

In this section, the analysis focused on complex signals. For real signals, we can follow the same methodology while taking the symmetry in the Fourier domain into account. This analysis is given in Appendix \ref{sec:analysis_real}.

\section{Expectation maximization} \label{sec:EM}

In this section, we detail the expectation maximization algorithm (EM)~\cite{dempster1977maximum} applied to MRA. As the numerical experiments in Section~\ref{sec:numerical_exp} demonstrate, EM achieves excellent accuracy in estimating the signal. However, compared to the invariant features approach proposed in this paper, it is significantly slower and requires many passes over the data (thus excluding online processing).

Let $X = \left[\xi_1, \ldots, \xi_M \right]$ be the data matrix of size $N \times M$, following the MRA model~\eqref{eq:MRA}. The  maximum marginalized likelihood estimator (MMLE) for the signal $x$ given $X$ is the maximizer of the likelihood function $L(x ; X) = p(X | x)$ (the probability density of $X$ given $x$). This density could in principle be evaluated by marginalizing the joint distribution $p(X, r | x)$ over the unknown shifts $r \in \{0, \ldots, N-1\}^M$. This, however, is intractable as it involves summing over $N^M$ terms.

Alternatively, EM tries to estimate the MMLE as follows. Given a current estimate for the signal $x_k$, consider the expected value of the log-likelihood function, with respect to the conditional distribution of $r$ given $X$ and $x_k$:
\begin{align}
	Q(x | x_k) & = \mathbb{E}_{r | X, x_k} \! \left\{ \log p(X, r | x_k) \right\}.
\end{align}
This step is called the E-step. Then, iterate by {computing the M-step:} 
\begin{align}
	x_{k+1} & = \arg \max_{x} Q(x | x_k).
\end{align}
For the MRA model, this can be done in closed form. Indeed, the log-likelihood function follows from the i.i.d.\ Gaussian noise model:
\begin{align}
	\log p(X, r | x) & = -\frac{1}{2\sigma^2} \sum_{j = 1}^{M} \|R_{r_j}x - \xi_j\|_2^2 + \mathrm{constant}.
\end{align}
To take the expectation with respect to $r$, we need to compute $w_{k}^{\ell,j}$: for each observation $j$, this is the probability that the shift $r_j$ is equal to $\ell$, given $X$ and assuming $x = x_k$. This also follows easily from the i.i.d.\ Gaussian noise model:
\begin{align}
	w_{k}^{\ell,j} & \propto \exp\left( -\frac{1}{2\sigma^2} \| R_\ell x_k - \xi_j \|_2^2 \right),
\end{align}
(with appropriate scale so that $\sum_{\ell = 0}^{N-1} w_{k}^{\ell,j} = 1$). This allows to write $Q$ down explicitly:
\begin{align*}
	Q(x | x_k) & = -\frac{1}{2\sigma^2} \sum_{j = 1}^M \sum_{\ell = 0}^{N-1} w_{k}^{\ell,j} \|R_\ell x - \xi_j\|_2^2 + \mathrm{constant}.
\end{align*}
This is a convex quadratic expression in $x$ with maximizer
\begin{align}
	x_{k+1} & = \frac{1}{M} \sum_{j = 1}^M \sum_{\ell = 0}^{N-1} w_{k}^{\ell,j} R_\ell^{-1} \xi_j.
\end{align}
In words: given an estimator $x_k$, the next estimator is obtained by averaging all shifted versions of all observations, weighted by the empirical probabilities of the shifts. Considering all shifts of all observations would, in principle, induce an iteration complexity of $O(MN^2)$, but fortunately, for each observation, the matrix of its shifted versions is circulant, which makes it possible to use FFT to reduce the overall computational cost to $O(MN\log N)$. See the available code for details. We note that Matlab naturally parallelizes the computations over $M$.

In practice, we set $x_0 \sim \mathcal{N}(0, I_N)$ to be a random guess. Furthermore, for $M \geq 3000$, we first execute 3000 batch iterations, where the EM update is computed based on a random sample of 1000 observations (fresh sample at each iteration). This inexpensively transforms the random initialization into a ballpark estimate of the signal. The algorithm then proceeds with full-data iterations until the relative change between two consecutive estimates  drops below $10^{-5}$ (in $\ell_2$-norm, up to shifts).

\section{Numerical Experiments } \label{sec:numerical_exp}

This section is devoted to numerical experiments, examining all proposed algorithms. Code for all algorithms and to reproduce the experiments is available online.\footnote{https://github.com/NicolasBoumal/MRA}  The experiments were conducted as follows. The true signal $x$ of length $N = 41$ is a fixed window of height 1 and width 21. {With this signal, the signal-to-noise ratio is $\frac{\| x\|^2}{\|\varepsilon\|^2}\approx \frac{1}{2\sigma^2}$.}
We generated a set of $M$ shifted noisy versions of $x$ as 
\begin{equation*}
\xi_j=R_{r_j}x+\varepsilon_j,
\end{equation*} 
where each shift was randomly drawn from a uniform distribution over $\{0,\dots,N-1\}$ and $\varepsilon_j\sim\mathcal{N}(0,\sigma^2I)$ for all $j$. The relative recovery error for a single experiment is defined  as 
\begin{equation*}
\textrm{relative error}(x,\hat x)= \min_{s\in\{0,\dots,N-1\}}\frac{\|R_{s}\hat{x}-x\|_2}{\|x\|_2},
\end{equation*}
 where $\hat{x}$ is the estimation of the signal. All results are averaged over 20 repetitions.
While we present here results for a specific signal, alternative signal models (e.g., random signals) showed similar numerical behavior.  


%
The following figures compare the recovery errors for all proposed algorithms, with random initialization for those that need initialization.
The non-convex algorithm on the manifold of phases, Algorithm \ref{alg:GD}, runs the Riemannian trust-region method (RTR)~\cite{genrtr} using the toolbox Manopt~\cite{manopt}. Algorithm~\ref{alg:APS} runs 15 iterations with warm-start using the same toolbox. 
 For the phase unwrapping algorithm, Algorithm \ref{alg:integer}, we use an implementation of LLL available in the MILES package \cite{chang2007miles}. 
The SDP is solved with CVX \cite{grant2008cvx}. The EM algorithm is implemented as explained in Section \ref{sec:EM}. We compared the algorithms with an oracle who knowns the random shifts $r_j$ and therefore simply averages out the Gaussian noise.  Experiments are run on a computer with 30 CPUs available. These CPUs are used to compute the invariants in parallel (with Matlab's \texttt{parfor}), while the EM algorithm benefits from parallelism to run the many thousands of FFTs it requires efficiently (built-in Matlab). The algorithms that need $\tilde{y}[0]$ and $\tilde{y}[1]$ are given the correct values. 

Figures \ref{fig:variableMrmse} and \ref{fig:variableMcpu} present the recovery error and computation time of all algorithms as a function of the number of observations $M$ for  fixed noise level $\sigma=1$. Of course, the oracle who knows the shifts of the observations is unbeatable. 
Algorithms~\ref{alg:GD} and~\ref{alg:APS} outperform all invariant approach methods. 
The inferior performance of the SDP might be explained by the fact that we are minimizing a smooth non-linear objective. This is in contrast to SDPs with linear or piecewise linear objectives which tend to promote ``simple'' (i.e., low rank) solutions~\cite{pataki1998rank}, \cite[Remark~6.2]{boumal2015staircase}.
Additionally, while this is not depicted on the figure, we note that for $\sigma=0$ all methods get exact recovery up to machine precision.
EM outperforms the best invariant features approaches by a factor of 3, at the cost of being significantly slower.  For large $M$, the best invariant features approaches are faster than EM by a factor of 25. {Note, however, that for $M$ up to about 300, EM is faster than the other algorithms.}
For invariant features approaches (aside from the SDP), almost all of the time is spent computing the bispectrum estimator, while inverting the bispectrum is relatively cheap. 

Figures \ref{fig:variablesigmarmse} and \ref{fig:variablesigmacpu} show the recovery error and computation time  as a function of  the noise level $\sigma$ with $M = 10,\!000$ observations. Surprisingly, for high noise level $\sigma\gtrsim 3$, the invariant features algorithms outperform EM. 


\begin{figure}
	\centering
	\includegraphics[width=\linewidth]{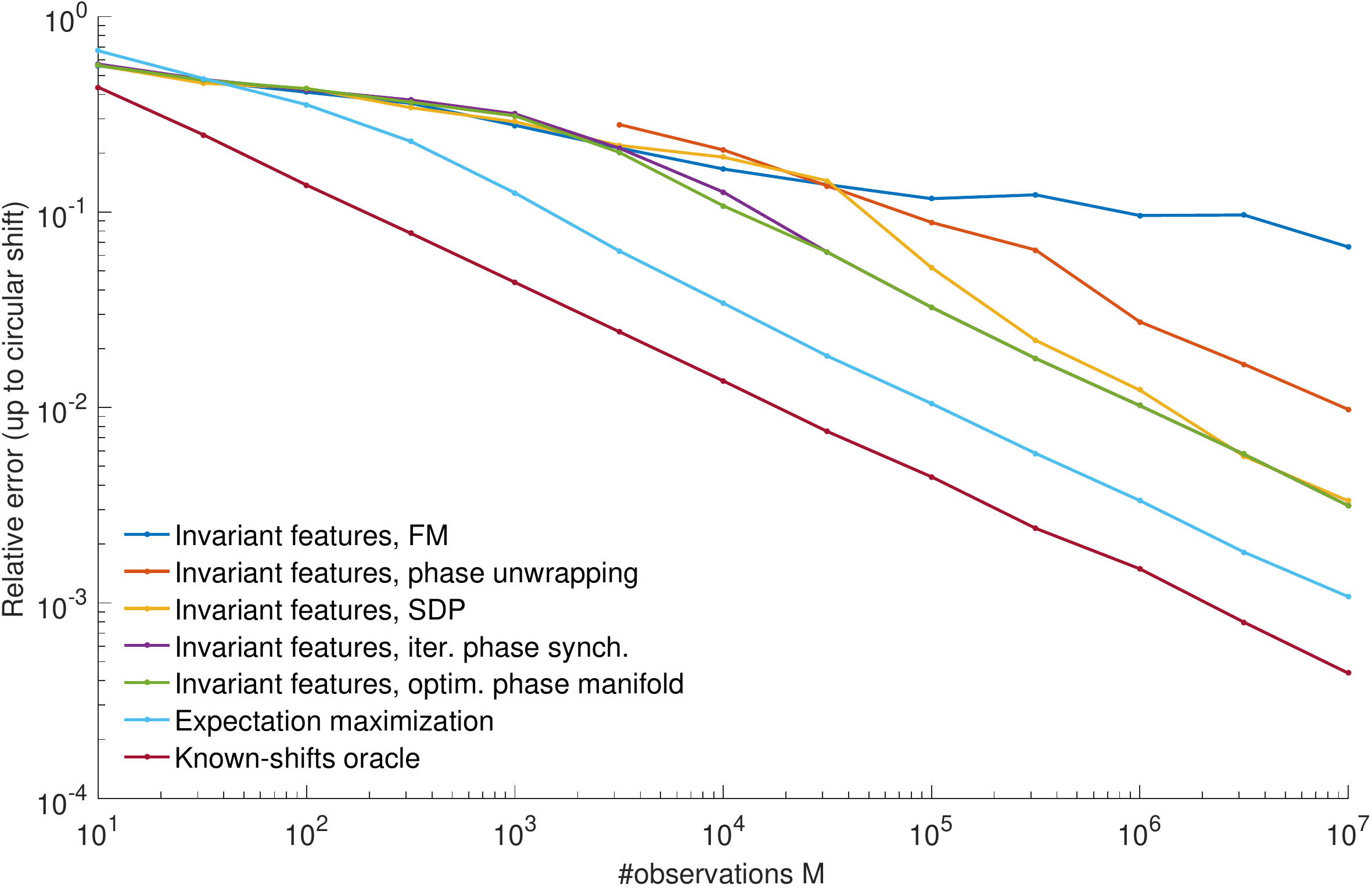}
	\caption{Relative recovery error for the signal $x$ as a function of the number of observations $M$ for fixed noise level $\sigma = 1$.  The curves corresponding to the {optim. phase manifold} (Algorithm~\ref{alg:GD}) and the {iter. phase synch.} (Algorithm~\ref{alg:APS}) overlap.}
	\label{fig:variableMrmse}
\end{figure}

\begin{figure}
	\centering
	\includegraphics[width=\linewidth]{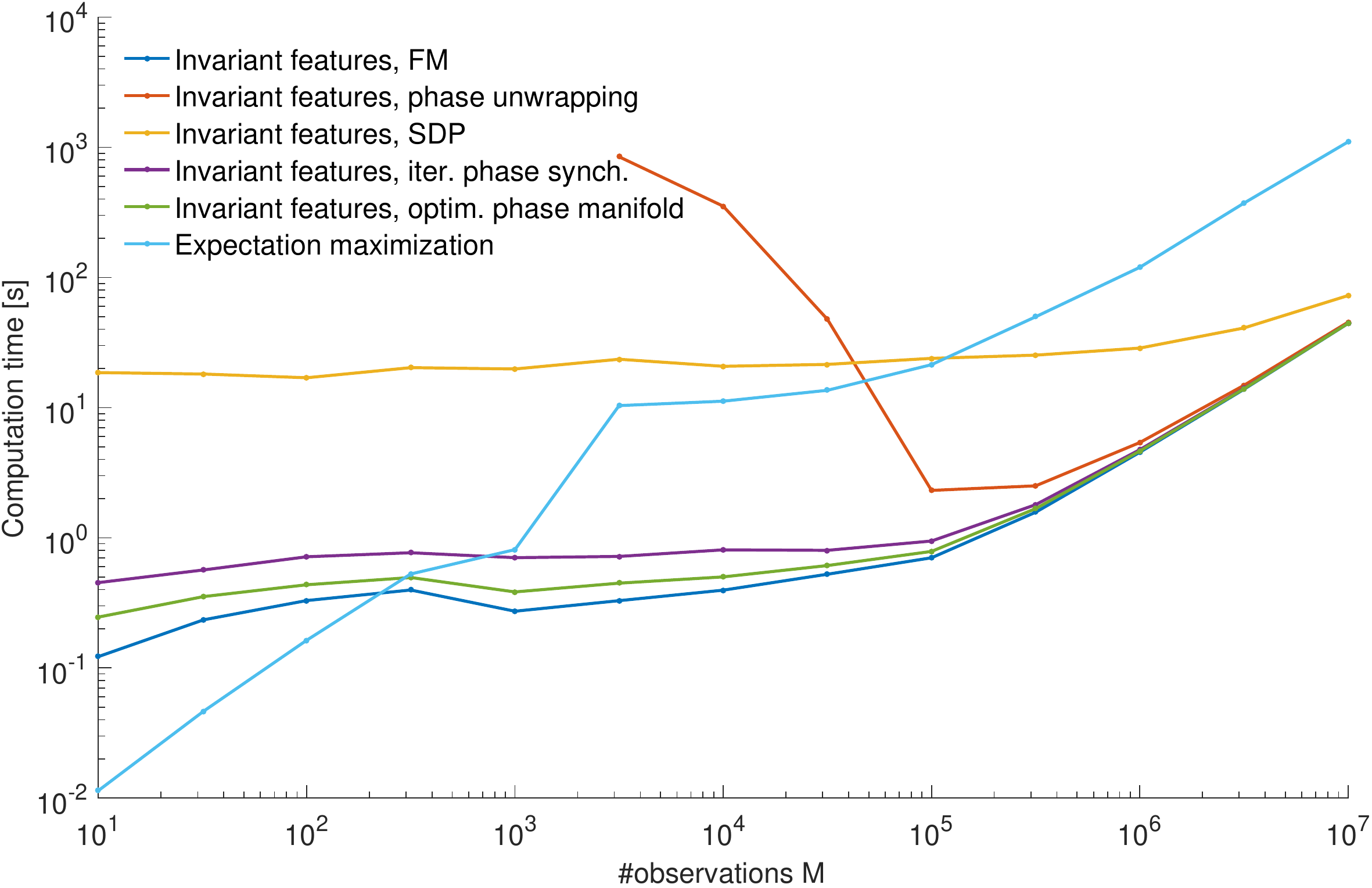}
	\caption{Average computation times corresponding to Figure~\ref{fig:variableMrmse}.}
	\label{fig:variableMcpu}
\end{figure}

\begin{figure}
	\centering
	\includegraphics[width=\linewidth]{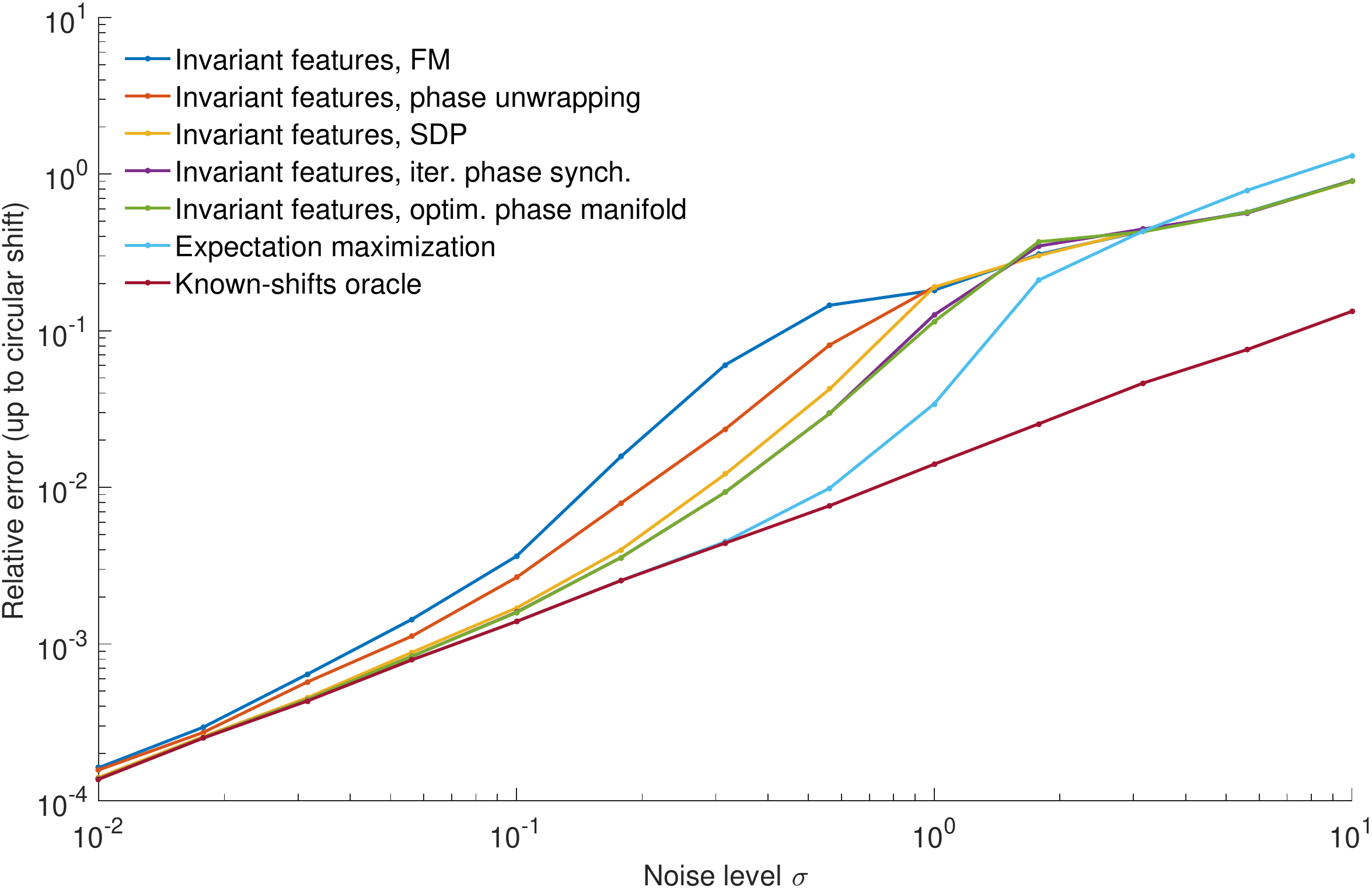}
	\caption{Relative recovery error for the signal $x$ as a function of the noise level $\sigma$  with $M = 10,\!000$ observations. The curves corresponding to the {optim. phase manifold} (Algorithm~\ref{alg:GD}) and the {iter. phase synch.}  (Algorithm~\ref{alg:APS}) overlap.}
	\label{fig:variablesigmarmse}
	\end{figure}

\begin{figure}
	\centering
	\includegraphics[width=\linewidth]{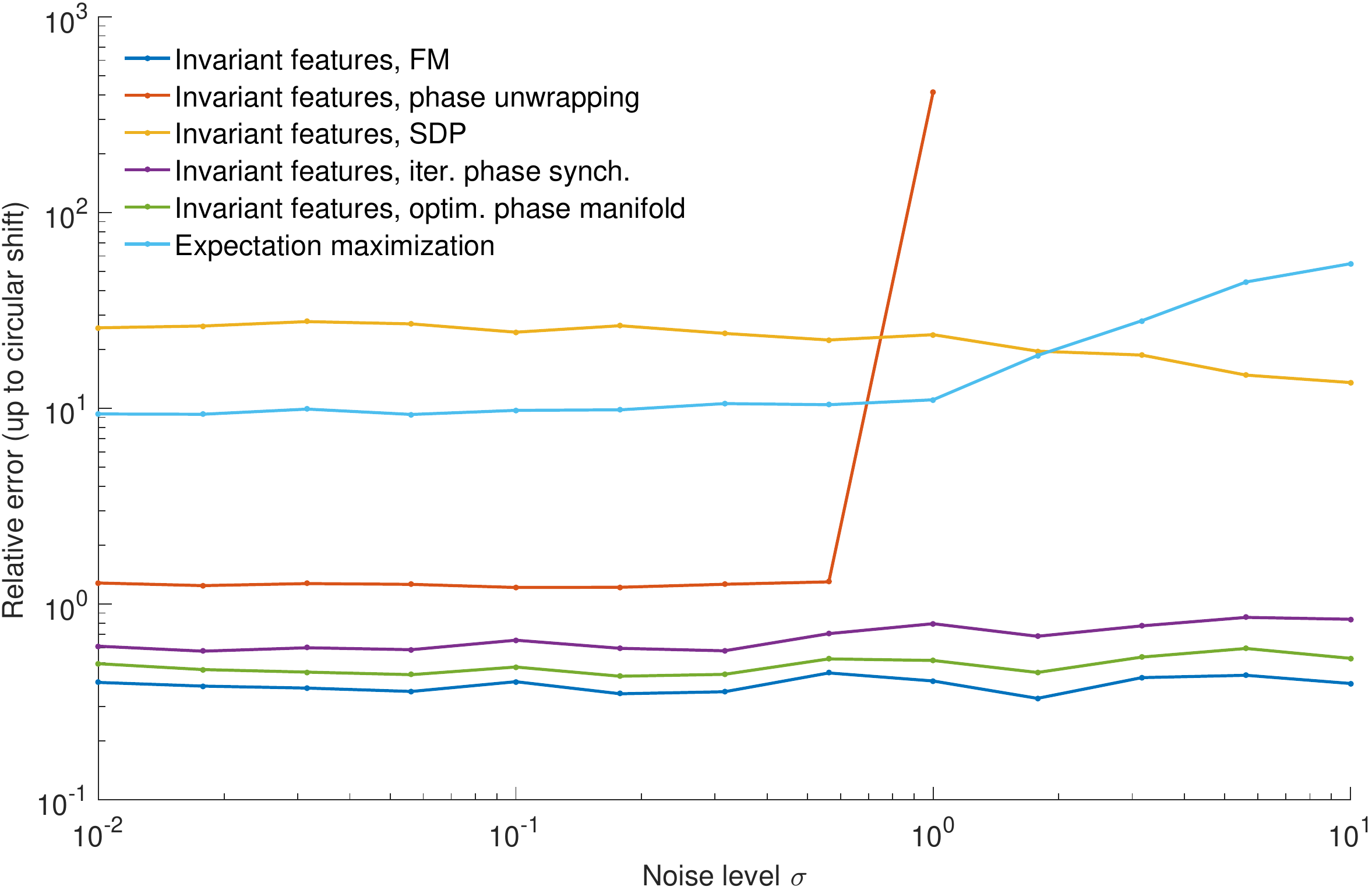}
	\caption{Average computation times corresponding to Figure~\ref{fig:variablesigmarmse}.}
	\label{fig:variablesigmacpu}
\end{figure}

\section{Conclusions and perspective} \label{sec:conclusion}

The goal of this paper is twofold. First, we have suggested a new approach for the MRA problem based on features that are invariant under translations. 
This technique enables us to deal with any noise level as long as we have access to enough measurements and particularly it achieves the sample complexity of MRA. The invariant features approach has low computational complexity and it requires less memory with respect to alternative methods, such as EM.   If one wants to have a highly accurate solution, it can therefore be used to initialize EM.

A main ingredient of the invariant features approach is estimating the signal's Fourier phases by inverting the bispectrum. Hence, the second goal of this paper was to study algorithms for bispectrum inversion.  
 We have proposed a few algorithms for this task. In the presence of noise,  {the non-convex algorithms on the manifold of phases, namely, Algorithms \ref{alg:GD} and \ref{alg:APS}, perform the best}. Empirically, these algorithms have a remarkable property: despite their non-convex landscape, they appear to converge to the target signal from random initialization. We provide some analysis for Algorithm \ref{alg:GD} but this phenomenon is not  well understood.
 
Our chief motivation for this work comes from the more involved problem of cryo-EM. In cryo-EM, a 3D object is estimated from its 2D  projections at unknown rotations in a low SNR environment. One line of research for the object recovery is based on first estimating the unknown rotations  \cite{singer2011three,shkolnisky2012viewing,wang2013orientation}. However, the rotation estimation is performed in a very noisy environment and therefore might be  inaccurate. An interesting question is to examine whether the 3D object can be estimated directly from the acquired data using features that are invariant under the unknown viewing directions~\cite{kam1980reconstruction}.

\section*{Acknowledgement}
{The authors are grateful to Afonso Bandeira, {Roy Lederman, William Leeb, Nir Sharon and Susannah Shoemaker for many insightful discussions. We also thank the reviewers for their useful comments, and particularly the anonymous reviewer who proposed a simpler proof for Lemma~\ref{lem:LA}.}
\
\bibliographystyle{ieeetr}

\bibliography{ref}

\appendix 

\numberwithin{equation}{subsection}
\numberwithin{thm}{subsection} 

\subsection{Stable inversion of bispectrum} \label{sec:sensitivity}
{

We first give a proof of Propositions~\ref{prop:sensitivity}.

Let $U$ denote the set of signals of length $N$ (either real or complex) whose DFTs
are non-vanishing.
Let $G = \{R_0, \ldots, R_{N-1}\}$ denote the discrete group of cyclic shifts over signals of length $N$. In MRA, the parameter space is the quotient space $U/G$, turned into a metric space with distance
\begin{align*}
	\dist([x], [y]) & = \min_{R_r \in G}\|R_r x - y\|_2.
\end{align*}
In the above notation, $[x] = \{R_0x, \ldots, R_{N-1}x\}$ is the equivalence class of signal $x$.
Let $V\subset \CNN$ denote the set of matrices whose entries are nonzero,
equipped with the Frobenius norm distance.

We now construct a function $\psi \colon V \to U/G$.
Crucially, $\psi$ is designed such that if $B\in V$ is the bispectrum of a signal $x$ in $U$, then $\psi(B) = [x]$: it inverts bispectra. Our purpose for building $\psi$ is to establish local Lipschitz continuity.

For a given input $B \in V$, for each $r$ in $\{0, \ldots, N-1\}$, we construct the moduli and phases of $y_r\in\CN$ separately. The inverse DFTs $\{x_0, \ldots, x_{N-1}\}$ form an equivalence class in $U/G$: this is the output $\psi(B)$. Specifically:
\begin{enumerate}
	\item For all $r$, set $|y_r[0]| = \sqrt[3]{|B[0, 0]|}$ and $\sign(y_r[0]) = \sign(B[0, 0])$. 
	\item For all $r$, the moduli $|y_r[k]|$ for $k = 1\ldots N-1$ are:
	\begin{align*}
	|y_r[k]| & = \sqrt{\frac{|B[k, k]|}{|y_r[0]|}}.
	\end{align*}
	\item Assume we are given a reference $\theta_0 \in \reals$ (specified later, independent of input $B$). For a unit-modulus complex number $z$, we define the principal $N$th root of $z$ as $z^{1/N} = e^{i\theta/N}$ with $\theta \in (\theta_0 - \pi, \theta_0+\pi]$ such that $z = e^{i\theta}$.
	\item With this definition of $N$th root, the phase of the first non-DC component $\sign(y_r[1])$ is
	\begin{multline}
	\sign\left(B[N-1, 1] B[1, 2] B[1, 2] \cdots B[1, N-1]\right)^{1/N} \\ \cdot e^{2\pi i r/N}.
	\label{eq:phaseyrone}
	\end{multline}
	(Notice $B[1, 2]$ appears twice.) Each value of $r$ assumes one of $N$ possible $N$th roots.
	\item The phases of $y_r[k]$ for $k = 2\ldots, N-1$ are defined recursively, separately for each $r$:
	\begin{align*}
	\sign(y_r[k]) & = \sign\left( y_r[k-1] \overline{B[1, k]} y_r[1] \right).
	\end{align*}
	\item Define $x_r$ as the inverse DFT of $y_r$; it is easily checked that $x_r = R_rx_0$, so that $\{ x_0, \ldots, x_{N-1} \}$ forms an equivalence class in $U/G$.
\end{enumerate}
We now argue that $\psi$ is locally Lipschitz continuous. 
The construction above implicitly defines a function $x_0 \colon V \to U$, parameterized by a reference $\theta_0$. For a given $B_0 \in V$, pick the reference such that $e^{i\theta_0} = \sign\left(B_0[N-1, 1] B_0[1, 2] B_0[1, 2] \cdots B_0[1, N-1]\right)$. There exists a neighborhood $W$ of $B_0$ such that all $B \in W$ obey
\begin{enumerate}
	\item $|B[k,\ell]| \geq \frac{1}{2}|B_0[k, \ell]|$ for all $k, \ell$, and
	\item $\Re\{ e^{-i\theta_0} \sign\left( Z \right) \} \geq -\frac{1}{2}$, \\ with $Z = B[N-1, 1] B[1, 2] B[1, 2] \cdots B[1, N-1]$.
\end{enumerate}
{In that neighborhood}, $\sign(y_r[1])$ as defined in~\eqref{eq:phaseyrone} is a smooth function of $B$. All other operations involved in computing $\psi(B)$ are smooth in $W$ as well, since entries of $B$ are nonzero. Thus, $x_0 \colon W \to U$ is smooth. In particular, there exists a neighborhood $\bar W \subset W$ of $B_0$ such that $x_0$ is Lipschitz in $\bar W$: there exists $L > 0$ such that
\begin{align*}
\forall B, B' \in \bar W, \quad \|x_0(B) - x_0(B')\|_2 \leq L \|B - B'\|_F.
\end{align*}
Thus, for all $B_0 \in V$ there exists a neighborhood $W$ of $B_0$ in $V$ and $L > 0$ such that
\begin{align*}
\forall B, B' \in W, \quad & \dist(\psi(B), \psi(B'))  \\ & \quad = \min_{r = 0\ldots N-1} \|R_r x_0(B) - x_0(B')\|_2 \\
& \quad \leq \|x_0(B) - x_0(B')\|_2 \\
& \quad \leq L \|B - B'\|_F,
\end{align*}
confirming $\psi$ is locally Lipschitz continuous. In particular, if $B = B_0$ is the bispectrum of $x \in U$ so that $\psi(B) = [x]$, then there exists $\delta > 0$ small enough and $L > 0$ such that if
$\|B' - B\|_\mathrm{F} \leq \delta$, then $[\hat x] = \psi(B')$ obeys
\begin{align*}
\dist([x], [\hat x]) & = \dist(\psi(B), \psi(B')) \leq L \|B - B'\|_F, 
\end{align*}
which shows $x$ can be estimated (up to shift) with finite error from a sufficiently good bispectrum estimator.

We now give a proof of Proposition~\ref{prop:bispectrumestimation}.

{The estimator $\hat B_x$ has expectation equal to $B_x$ and variances on its individual entries are bounded by $c\frac{\sigma^2 + \sigma^6}{M}$ for some $c = c(x)$.}

Chebyshev's inequality states that for a random variable $X$ with mean $\mu$ and variance $\sigma^2$ the probability that $|X-\mu|$ exceeds $t\sigma$ is bounded by $1/t^2$. We have $N^2$ random variables.
By a union bound (which, importantly, does not require independence), we have
\begin{align*}
\operatorname{Prob}\!\left[ \left|\hat B_x[k, \ell] - B_x[k, \ell]\right| \leq t\sqrt{c\frac{\sigma^2 + \sigma^6}{M}} \forall k, \ell \right] > 1 - \frac{N^2}{t^2}.
\end{align*}
Pick $t = \frac{N}{\sqrt{1-p}}$ so that the probability is at least $p$. Now pick $M$ such that $t\sqrt{c\frac{\sigma^2 + \sigma^6}{M}} \leq \delta'$.
Explicitly,
\begin{align*}
M & \geq \frac{cN^2}{(1-p)\delta'^2}(\sigma^2 + \sigma^6).
\end{align*}
This shows that for any $p$ and $\delta'$ we can pick $C$ such that with $M \geq C(\sigma^2 + \sigma ^6)$ we have that $\hat B_x$ approximates $B_x$ within $\delta'$ (entry-wise) with probability at least $p$. Pick $\delta' = \delta/N$ to conclude. (We stress that this expression for $C$ is suboptimal.)


}

\subsection{Analysis of optimization over phases for real signals} \label{sec:analysis_real}

As aforementioned, if the signal under consideration is real, then the phases $\tilde y$ of its Fourier transform obey symmetries. These should be exploited, and the estimator should satisfy the same symmetries. Specifically,
\begin{align*}
\tilde y[N-k] = \overline{\tilde y[k]}, \ \forall k.
\end{align*}
This notably implies that $T(\tilde y)$ is Hermitian, and as a result that $\tilde B$ is Hermitian. It is then also sensible to take $W$ symmetric. Furthermore, this implies that $\tilde y[0]$ is real, so that $\tilde y[0] = \pm 1$, and, if $N$ is even, that $\tilde y[N/2] = \pm 1$ as well. In the latter case, observe that $\tilde y[N/2]$ changes sign when the signal is time-shifted by one index, so that we may fix $\tilde y[N/2] = 1$ without loss of generality, if need be---this is discussed more below. Without loss of generality, let us assume $\tilde y[0] = 1$ (in the MRA framework, this would be estimated from $\hat \mu_x$).

If the signal is real,  the conditions of Lemma \ref{lem:uniquness} can be slightly alleviated so that  
the  global optima of~\eqref{eq:maxproblem} corresponds exactly to  $\tilde y$  as follows:
\begin{lem} \label{lem:uniqueness_real}
	For $N \geq 5$, let $x\in\RN$ be a real signal whose DFT $y$ is nonzero for frequencies $k$ in $\{1, \ldots, K\} \cup \{ N-1, \ldots, N-K \}$, possibly also for $k = 0$, and zero otherwise. Up to integer time shifts, $x$ is determined exactly by its bispectrum $B$ provided $\frac{N}{3} \leq K \leq \frac{N-1}{2}$. Furthermore, the global optima of~\eqref{eq:maxproblem} (with conjugate-reflection constraints) correspond exactly to the relevant phases of $y$---up to the effects of integer time shifts---provided $W[k,\ell]$ is positive when $B[k, \ell]\neq 0$.
\end{lem}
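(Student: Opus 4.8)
The plan is to reduce both assertions --- uniqueness up to shift, and exact recovery of the phases at the global optima --- to a single phase-recovery statement on the Fourier support $S := \{1,\dots,K\}\cup\{N-1,\dots,N-K\}$, and then to prove that statement by a recursion-plus-closure argument in which the conjugate-reflection symmetry is the crucial ingredient not present in the complex case of Lemma~\ref{lem:uniqueness}. Write $\phi[k]=\arg y[k]$, and for a candidate point $z$ of~\eqref{eq:maxproblem} write $\phi[k]=\arg z[k]$. By definition the bispectrum encodes, for each $(k,\ell)$ with $B[k,\ell]\neq0$, a relation $\phi[p_1]+\phi[p_2]-\phi[p_3]\equiv\arg B[k,\ell]\pmod{2\pi}$ with $p_1+p_2\equiv p_3\pmod N$ and $p_1,p_2,p_3\in S$. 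The moduli $|y[k]|$ are recovered from $|B|$ as in Lemma~\ref{lem:uniqueness}: they solve a small linear system in the $\log|y[k]|$, $k\in S$ (using $|y[N-k]|=|y[k]|$), of full rank under $\tfrac{N}{3}\le K\le\tfrac{N-1}{2}$; in the case $y[0]=0$ one closes that system with wraparound entries $B[a,N-b]$. For the optimization claim, each summand of the objective of~\eqref{eq:maxproblem} is bounded by $W[k,\ell]\,|B[k,\ell]|$, with equality iff the corresponding phase monomial of $z$ is aligned with $B[k,\ell]$; since $W[k,\ell]>0$ whenever $B[k,\ell]\neq0$, the global optima are precisely the feasible $z$ (unit modulus, $z[N-k]=\overline{z[k]}$, $z[0]=1$) whose phases satisfy \emph{all} of the relations above, while the phases of $z$ off $S$ do not enter the objective --- which is exactly the ``up to the effects of integer time shifts'' qualifier. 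So in both cases it remains to show that the only such phase vectors on $S$ are the $N$ cyclic shifts of $\arg y$.

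First I would parametrize all solutions by the single scalar $\phi[1]$. The feasibility/symmetry constraints give $\phi[0]=0$ and $\phi[N-k]=-\phi[k]$; since $K\le\tfrac{N-1}{2}$, the two arcs of $S$ are disjoint and contain neither $0$ nor (for even $N$) $N/2$, so $|S|=2K$ and the bispectrum entry coupling $\phi[k-1],\phi[1],\phi[k]$ is nonzero for every $k\in\{2,\dots,K\}$. Propagating that relation yields $\phi[m]=m\,\phi[1]-c_m$ for $m=1,\dots,K$, where $c_m$ depends only on the data; conjugate symmetry then fixes $\phi[N-m]=-m\,\phi[1]+c_m$. Hence the entire phase vector on $S\cup\{0\}$ is an affine function of $\phi[1]$, so there are at most as many solutions as admissible values of $\phi[1]$.

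The closure step is where $K\ge N/3$ is used. Because $K\ge N/3$, the sets $\{N-K,\dots,N-1\}$ and $\{a+b:a,b\in\{1,\dots,K\}\}=\{2,\dots,2K\}$ intersect, so I can pick $a,b\in\{1,\dots,K\}$ with $a+b\in\{N-K,\dots,N-1\}$; then $a,b,a+b\in S$, so the bispectrum entry coupling $\phi[a],\phi[b],\phi[a+b]$ is nonzero, hence carries positive weight. Substituting the affine formulas --- and using $\phi[a+b]=-(N-a-b)\,\phi[1]+c_{N-a-b}$ from conjugate symmetry --- the coefficient of $\phi[1]$ in that relation is $a+b+(N-a-b)=N$, so it collapses to $N\,\phi[1]\equiv(\text{a constant read off from }B)\pmod{2\pi}$. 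This has exactly $N$ solutions $\phi[1]=\arg y[1]+2\pi r/N$, $r=0,\dots,N-1$ (the true $\arg y$ is one of them, being the phases of a signal with bispectrum $B$), and each solution feeds through the recursion to give precisely the phase vector of $R_r x$; these $N$ vectors are distinct because $y[1]\neq0$ forces $x$ to have full period $N$. Since $\arg y$ and all its shifts satisfy every phase relation, the solution set equals this orbit, which yields the uniqueness statement (together with the moduli) and the characterization of the global optima.

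I expect the closure step to be the main obstacle, and it is precisely what fixes the threshold $K\ge N/3$: one needs a phase relation built only from ``active'' entries (those with $B\neq0$, hence with positive weight) whose $\phi[1]$-coefficient is exactly $\pm N$, so that $\phi[1]$ --- and with it the whole phase vector --- is pinned down to the unavoidable $2\pi/N$ shift ambiguity and nothing coarser. The conjugate-reflection symmetry is what makes this work: it halves the number of independent phase parameters (from the $2K$ phases on $S$ down to the single $\phi[1]$) and turns one entry coupling two low modes $a,b$ to the mode $a+b$ --- which, lying in the upper arc of $S$, is itself a reflected low mode --- into exactly the closure required, available as soon as $a+b$ can reach $\{N-K,\dots,N-1\}$, i.e.\ as soon as $2K\ge N-K$. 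The remaining points --- disjointness of the arcs from $K\le\tfrac{N-1}{2}$, full rank of the log-modulus system (including $y[0]=0$), and the bookkeeping that a shift $R_r$ acts on phases by $\phi[k]\mapsto\phi[k]+2\pi kr/N$ and respects the conjugate-reflection constraints --- are routine and mirror the complex case, and $N\ge5$ is used only to make the window $\tfrac{N}{3}\le K\le\tfrac{N-1}{2}$ nonempty with $K\ge1$ and to exclude degenerate small cases.
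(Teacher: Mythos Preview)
Your proposal is correct and follows essentially the same route as the paper's proof. The only cosmetic differences are that the paper works with the substitution $u=\overline{z}\circ\tilde y$ (so the recursion becomes $u[k]=u[1]^k$ and the closure becomes $u[1]^N=1$) and picks the specific closure triple $(K,\,N-2K,\,N-K)$, whereas you keep the raw phases $\phi[k]$ with constants $c_m$ and allow any $a,b\in\{1,\dots,K\}$ with $a+b\in\{N-K,\dots,N-1\}$; the key mechanism---propagate along $k=1$ to reduce to one free phase, then use $K\ge N/3$ and the conjugate-reflection symmetry to produce a single relation with $\phi[1]$-coefficient $N$---is identical.
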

\begin{proof}
	See Appendix \ref{sec:proof_uniqueness_real}.
\end{proof}

In the following, it is helpful to introduce visual notation for the symmetries. Given any vector $u \in \CN$, consider the following linear operations:
\begin{align*}
u_\downarrow & = u = \begin{pmatrix} u[0] \\ u[1] \\ u[2] \\ \vdots \\ u[N-1] \end{pmatrix}, & u_\uparrow & = \begin{pmatrix} u[0] \\ u[N-1] \\ \vdots \\ u[2] \\ u[1] \end{pmatrix},
\end{align*}
and 
\begin{equation} \label{eq:downup}
u_{\downarrow\uparrow} = \frac{u_\downarrow + \overline{u_\uparrow}}{2}.
\end{equation}
Using this notation, the symmetries of $z$, namely, $z[N-k] = \overline{z[k]}$ for all $k$, can be written $z_\uparrow = \overline{z_\downarrow}$. As a result, for odd $N$ the phase estimation problem lives on the following submanifold of $\calM$:
\begin{align*}
\calM_\reals & = \Big\{ z \in \calM : z_\uparrow = \overline{z_\downarrow}, \ z[0] = 1  \Big\}.
\end{align*}
If $N$ is even, the submanifold takes a slightly different form: 
\begin{align*}
\calM_\reals & = \Big\{ z \in \calM : z_\uparrow = \overline{z_\downarrow}, \ z[0] = z[N/2]= 1  \Big\}.
\end{align*}
The corresponding optimization problem is
\begin{align}
\max_{z \in \calM_\reals} f(z) = \inner{z}{M(z)z}.
\label{eq:maxfcalMreal}
\end{align}

To carry out the optimization of~\eqref{eq:maxfcalMreal}, we follow the same protocol as for the complex case, namely, we obtain expressions for the Riemannian gradient and the Riemannian Hessian of the cost function $f$ restricted to the manifold $\calM_\reals$, at which point we will be in a position to use standard algorithms.

The first step is to obtain an expression for the orthogonal projector from $\CN$ to the tangent spaces of $\calM_\reals$. These tangent spaces are readily obtained by differentiating the constraints:
\begin{align*}
\T_z\calM_\reals & = \left\{ \dot z \in \T_z\calM : \dot z_\uparrow = \overline{\dot z_\downarrow} \right\}.
\end{align*}
This implicitly forces $\dot z[0] = 0$ and, if $N$ is even, $\dot z[N/2] = 0$. The orthogonal projection from $\CN$ to that linear space is
\begin{align*}
\Proj_z^\reals(u) & = \Proj_z\left(\frac{u_\downarrow + \overline{u_\uparrow}}{2}\right) = \Proj_z\left(u_{\downarrow\uparrow}\right).
\end{align*}
One can check that, for any $u$, $\left(\Proj_z^\reals u\right)[k] = 0$ if $k = 0$ and if $k = N/2$ for $N$ even, as expected. As for the complex case, these steps yield explicit expressions for the Riemannian gradient and Hessian of $f$ on $\calM_\reals$:
\begin{align*}
\grad f(z) & = \Proj_z^\reals\left( \nabla f(z) \right) \nonumber\\
& = \nabla f(z)_{\downarrow\uparrow} - \Re\{\nabla f(z)_{\downarrow\uparrow} \circ \overline{z}\} \circ z, 
\end{align*}
and
\begin{align*}
\Hess f(z)[\dot z] & = \Proj_z^\reals\left( \D\left(z \mapsto \grad f(z)\right)(z)[\dot z] \right) \nonumber\\
& = \Proj_z^\reals\left( \nabla^2 f(z)[\dot z]_{\downarrow\uparrow} - \Re\{\nabla f(z)_{\downarrow\uparrow} \circ \overline{z} \} \circ \dot z\right) \nonumber \\
& = \Proj_z^\reals\left( \nabla^2 f(z)[\dot z] - \Re\{\nabla f(z)_{\downarrow\uparrow} \circ \overline{z} \} \circ \dot z\right). 
\end{align*}
To reach the last equality, we used the fact that $\Proj_z^\reals(\alpha \circ z) = 0$ for any real vector $\alpha$, so that one of the terms vanished under the projection.

Interestingly, for $z$ restricted to $\calM_\reals$, the cost function and its derivatives simplify. Indeed, $T(z)$ is now Hermitian, so that $M(z)$ is Hermitian (using a symmetric $W$) and~\eqref{eq:nablaf} becomes:
\begin{align*}
&f(z) = z^* M(z) z, \\
&\nabla f(z)  = 3M(z)z,  \\
&\nabla^2 f(z)[\dot z]  = 3M(\dot z)z + 3M(z)\dot z = 6M(z)\dot z.
\end{align*}
The last simplification of the Hessian follows from this result, under symmetry assumptions which are valid in the real case.
\begin{lem} \label{lem:Mz}
	Under the same symmetry conditions as in Lemma~\ref{lemma:symmetryMadj}, if additionally $W$ and $\tilde B$ are Hermitian, then, if $z\in\calM_\reals$ and $\dot z \in \T_z\calM_\reals$, it holds that $M(z)\dot z = M(\dot z)z$.
\end{lem}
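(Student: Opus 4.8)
The plan is to reduce the claimed vector identity to a scalar one and then read it off from Lemma~\ref{lemma:symmetryMadj}, using the reflection symmetries that cut out $\calM_\reals$ and $\T_z\calM_\reals$ together with the new hypothesis that $W$ and $\tilde B$ are Hermitian. Since $M(z)\dot z = M(\dot z)z$ is an equality of vectors, it is enough to show $\langle u, M(z)\dot z\rangle = \langle u, M(\dot z)z\rangle$ for every $u\in\CN$. Because $v\mapsto M(v)$ is linear, $\langle u, M(v)w\rangle$ is a form that is linear in $v$ and in $w$ (and conjugate-linear in $u$), so what must be proved is that this form is invariant under exchanging its last two slots when they are filled by $z\in\calM_\reals$ and $\dot z\in\T_z\calM_\reals$. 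Concretely, I would first write $\langle u, M(v)w\rangle$ as its defining sum over index pairs, built from the weights $W$, the data $\tilde B$, and the vectors $u,v,w$; this is where the precise bispectrum structure of $M$ (and hence \eqref{eq:nablaf}) enters.

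Second, I would invoke Lemma~\ref{lemma:symmetryMadj}: under the symmetry conditions assumed there, it supplies a relation that swaps the roles of $v$ and $w$ in $\langle u, M(v)w\rangle$ at the cost of a complex conjugation together with the index reflection $(\cdot)_\uparrow$ on the vectors and a transpose-conjugation of $W$ and $\tilde B$. The extra hypotheses then finish the argument. Since $W$ and $\tilde B$ are Hermitian, the transpose-conjugation they acquire is absorbed; and since $z\in\calM_\reals$ and $\dot z\in\T_z\calM_\reals$ we have $z_\uparrow=\overline{z_\downarrow}$ and $\dot z_\uparrow=\overline{\dot z_\downarrow}$, equivalently $\overline{z_\uparrow}=z$ and $\overline{\dot z_\uparrow}=\dot z$, so the conjugation-with-reflection acting on the $z$ and $\dot z$ slots is also the identity. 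Chaining these, every conjugation and reflection produced by the swap cancels, and we are left with $\langle u, M(z)\dot z\rangle=\langle u, M(\dot z)z\rangle$, i.e., $M(z)\dot z=M(\dot z)z$. The pinned coordinates forced by the description of $\T_z\calM_\reals$ (that is, $\dot z[0]=0$, and $\dot z[N/2]=0$ for even $N$) are compatible with the reindexing and need no separate treatment.

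I expect the only real work to be organizational: verifying that the conjugation/reflection produced by Lemma~\ref{lemma:symmetryMadj} upon interchanging the last two arguments cancels exactly, i.e., that each $\overline{(\cdot)_\uparrow}$ appearing lands on an object ($W$, $\tilde B$, $z$, or $\dot z$) for which that operation is trivial by the Hermitian assumption or by membership in $\calM_\reals$ / $\T_z\calM_\reals$. Once the symmetries are lined up, the identity follows from a single change of summation variable, with no analytic content. A secondary point worth checking: if $M$ is defined via a symmetrization over $W$ so that $M(z)$ is literally the matrix appearing in $f(z)=\langle z, M(z)z\rangle$, I would confirm that this symmetrization is already implied by the assumed symmetry hypotheses, so that it does not obstruct the cancellation.
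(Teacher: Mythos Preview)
Your high-level plan---reduce to a single change of summation variable, then kill the conjugations and index reflections using the Hermitian hypothesis on $W,\tilde B$ and the real-signal symmetries $z_\uparrow=\overline{z}$, $\dot z_\uparrow=\overline{\dot z}$---is exactly what the paper does. The paper works entrywise rather than pairing against an arbitrary $u$: it writes $(M(z)\dot z)[k]=\sum_\ell A[k,\ell]\,\overline{z[\ell-k]}\,\dot z[\ell]$ with $A=W^{(2)}\circ\tilde B$, substitutes $\ell\mapsto k-\ell$, and uses $A[k,\ell]=A[k,k-\ell]$ together with $z[-m]=\overline{z[m]}$, $\dot z[-m]=\overline{\dot z[m]}$ to land on $(M(\dot z)z)[k]$. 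Your inner-product framing is an unnecessary detour but not wrong.

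There is, however, a genuine imprecision in how you invoke Lemma~\ref{lemma:symmetryMadj}. That lemma's \emph{conclusion} is $M(z)z=M\adj(zz^*)$; it does not hand you a ``swap rule'' for $\langle u, M(v)w\rangle$ that exchanges $v$ and $w$ up to conjugation and reflection. What the proof of Lemma~\ref{lem:Mz} actually uses is the \emph{hypothesis} of Lemma~\ref{lemma:symmetryMadj}, namely the index symmetry $A[\ell-k,\ell]=A[k,\ell]$. Combining that with the new Hermitian assumption $A[k,\ell]=\overline{A[\ell,k]}$ yields the key identity
\[
A[k,\ell]=\overline{A[\ell,k]}=\overline{A[k-\ell,k]}=A[k,k-\ell],
\]
which is precisely the symmetry of $A$ needed for the change of variable to close. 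If you rewrite your second step to derive this identity explicitly (rather than appealing to a nonexistent swap statement in Lemma~\ref{lemma:symmetryMadj}), your argument becomes correct and coincides with the paper's.
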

\begin{proof}
	See Appendix \ref{sec:proof_lem_Mz}.
\end{proof}

It remains to define a retraction on $\calM_\reals$. An obvious choice is:
\begin{align*}
\Retr_z^\reals(\dot z) & = \Retr_z(\dot z),
\end{align*}
where $\Retr_z$ was defined in~\eqref{eq:Retr}. Indeed, it is a simple exercise to check that $\Retr_z(\dot z)$ is in $\calM_\reals$ for any $z\in\calM_\reals$ and $\dot z \in \T_z\calM_\reals$:
\begin{align*}
\Retr_z(\dot z)_\uparrow & = \sign(z + \dot z)_\uparrow = \sign(z_\uparrow + \dot z_\uparrow) \\& = \sign(\overline{z_\downarrow + \dot z_\downarrow}) = \overline{\Retr_z(\dot z)_\downarrow}.
\end{align*}

Empirically, with exact data $\tilde B$ and unit weights $W$, we find that the parity of $N$ has an important effect on the nature of second-order critical points of~\eqref{eq:maxfcalMreal}. If $N$ is odd and $z[0] = \tilde y[0]$, for $N = 3, 5, 7, 9$, we find empirically that $f \colon \calM_\reals \to \reals$ has $N$ second-order critical points which are all global optima; they correspond to the phases of the DFT of all possible time-shifts of the unknown signal. Hence, any second-order critical point leads to exact recovery. On the other hand, if $N$ is even, we find that only $N/2$ second-order critical points are global optima. If $N = 4, 6, 8$, there are no other second-order critical points. If $N = 10, 12$, there are $N/2$ additional second-order critical points. These are strict, non-global local optima. Interestingly, if one flips the sign of $z[N/2]$, one recovers the phases of one of the other $N/2$ possible time-shifts of the unknown signal. This is why, for even $N$, we recommend the following: (i) compute a second-order critical point $z$ of~\eqref{eq:maxfcalMreal} from some initial guess; (ii) let $z'$ be $z$ with the sign of $z[N/2]$ flipped; (iii) if $f(z') > f(z)$, run the optimization again with $z[N/2]$ flipped in $\calM_\reals$, this time starting at $z'$. In the noiseless case, empirically, $z'$ will already be optimal.

Of course, the parameterization of $\calM_\reals$ as proposed here is redundant. Given the symmetries, one could alternatively choose to work with only the phases $z[1]$ to $z[\lfloor\frac{N-1}{2}\rfloor]$, which are sufficient to determine all of $z\in\calM_\reals$. This is certainly computationally advantageous. The choice to work with a redundant parameterization above is motivated by the simpler exposition it allows. There is no conceptual or technical difficulty in implementing the above with a non-redundant parameterization.

\subsection{Proof of Lemma \ref{lem:manopt_invariance}} \label{sec:proof_lem_manopt_invariance}

Defining $z_\theta$ as $z_\theta[k] = z[k]e^{i\theta g(k)}$ with $g(k) = k \bmod N$, the statement to prove is equivalent to $f(z) = f(z_\theta)$ for any $\theta$ which is an integer multiple of $\frac{2\pi}{N}$.

Let $\omega = e^{i\theta}$ and let $u \in \CN$ be the vector defined by $u[k] = \omega^{g(k)}$. Then, $z_\theta = z \circ u = \diag(u)z$ so that
\begin{align*}
f(z_\theta) = \Re\{ z^* \diag(u)^* M(z_\theta) \diag(u) z \}.
\end{align*}
Using~\eqref{eq:Mdef} for the definition of $M$ we get
\begin{align*}
M(z_\theta) & = W^{(2)} \circ \tilde B \circ \overline{T(z \circ u)} = M(z) \circ \overline{T(u)}.
\end{align*}
Hence,
\begin{align*}
\diag(u)^* M(z_\theta) \diag(u) = M(z) \circ \diag(u)^* \overline{T(u)} \diag(u).
\end{align*}
All dependence on $\theta$ now resides in the matrix $\diag(u)^* \overline{T(u)} \diag(u)$. Its entries obey
\begin{align*}
\left(\diag(u)^* \overline{T(u)} \diag(u)\right)[k_1, k_2] & = \overline{u[k_1]} \overline{u[k_2-k_1]} u[k_2] \\& = \omega^{g(k_2) - g(k_1) - g(k_2-k_1)}. 
\end{align*}
For $(k_1, k_2)$ ranging from 0 to $N-1$, the exponent $g(k_2) - g(k_1) - g(k_2-k_1)$ evaluates to 0 if $k_2 \geq k_1$ and to $N$ otherwise. In the first case, $\omega^0 = 1$. In the second case, $\omega^N = e^{i\theta N} = 1$ owing to the assumption that $\theta$ is an integer multiple of $\frac{2\pi}{N}$. Consequently,
\begin{align*}
\diag(u)^* M(z_\theta) \diag(u) = M(z)
\end{align*}
and indeed $f(z) = f(z_\theta)$.


\subsection{Averaging over phases} \label{sec:averaging_so2}

Let $a_1,a_2,\dots,a_K$ be complex numbers with unit modulus. We define the average over the set of phases (i.e., the group SO(2)) as the solution of
\begin{equation*}
\min_{z\in\mathbb{C}}\sum_{k=1}^K\vert z - a_k\vert^2\quad\mbox{subject to}\quad \vert z\vert =1.
\end{equation*}
Expanding the square modulus, the problem is equivalent to 
\begin{equation*}
\max_{z\in\mathbb{C}}\Re\left\{ \overline{z}\sum_{k=1}^Ka_k \right\}\quad\mbox{subject to}\quad \vert z\vert =1.
\end{equation*}
The last expression is maximized if and only if $\overline{z}\sum_{k=1}^Ka_k$ is real and non-negative, i.e., $z$ is the phase of $\sum_{k=1}^Ka_k$. Explicitly, if $\sum_{k=1}^Ka_k\neq 0$ then  
\begin{equation*}
z = \sign\left(\sum_{k=1}^Ka_k\right).
\end{equation*}
Otherwise, any unit-modulus $z$ is optimal.

\subsection{Proof of Lemma \ref{lem:LA}} \label{sec:proof_lemma_LA}
{By expanding $u[1]$, we get 
\begin{equation*}
\begin{split}
1 &= u[1] = \Re\{u[1]\} = \frac{1}{N} \Re\left\{ \sum_{k=0}^{N-1}\hat{u}[k]e^{2\pi i k/N} \right\} \\ 
& = \frac{1}{N}  \sum_{k=0}^{N-1}\hat{u}[k]\Re\left\{e^{2\pi i k/N} \right\},
\end{split}
\end{equation*}
where the last equality holds since $\hat{u}$ is real. Therefore, since $\hat{u}$ is non-negative, we get
\begin{equation*}
\begin{split}
1 &= u[1] \leq   \frac{1}{N}  \sum_{k=0}^{N-1}\hat{u}[k]   =  u[0] = 1.
\end{split}
\end{equation*}
This equality holds if and only if $\hat{u}[k] =0$ whenever $\Re\left\{e^{2\pi i k/N} \right\}<1$. This implies in turn that $\hat{u}$ is a delta function and thus $u$ is a constant. 
}

\subsection{Proof of Theorem \ref{th:sdp}} \label{sec:proof_sdp}
In the noiseless case, plugging in the correct signal into \eqref{eq:sdp_relax_noisy} satisfies the constraints and attains 0 for the objective value, which is clearly minimal. Thus, any global optimum $(Z, z)$ must have objective value 0, meaning
 $\tilde{B}\circ\overline{T(z)} = Z$.
 
Using Schur's complement, the SDP constraint $Z \succeq zz^*$ can be written as
\begin{equation*}
Q:=\begin{bmatrix}Z & z\\
z^{*} & 1
\end{bmatrix} = \begin{bmatrix}\tilde{B}\circ\overline{T(z)} & z\\
z^{*} & 1
\end{bmatrix} \succeq 0.
\end{equation*} 

Let \[ U=\begin{bmatrix}\mbox{diag}\left(\tilde{y}\right) & 0\\
0 & 1
\end{bmatrix},\]
where $\mbox{diag}\left(\tilde{y}\right)$ is a diagonal matrix with the entries of $\tilde{y}$. Observe that $Q\succeq0$ implies $U^{*}QU\succeq 0$. Recall that  each entry of $\tilde{y}$ has modulus one. Then, using $\tilde B = \tilde y \tilde y^* \circ T(\tilde  y)$, we get 
\begin{align*}
U^{*}QU & =\begin{bmatrix}\mbox{diag}\left(\overline{\tilde{y}}\right) & 0\\
0 & 1
\end{bmatrix}\begin{bmatrix}\tilde B\circ\overline{T(z)} & z\\
z^{*} & 1
\end{bmatrix}\begin{bmatrix}\mbox{diag}\left(\tilde{y}\right) & 0\\
0 & 1
\end{bmatrix}\\
 & =\begin{bmatrix}T(\tilde{y}\circ\overline{z}) & \overline{\tilde{y}}\circ z\\
\left(\overline{\tilde{y}}\circ z\right)^{*} & 1
\end{bmatrix}\\&=\begin{bmatrix}T(\overline{u}) & u\\
u^{*} & 1
\end{bmatrix}\\ &\succeq 0,
\end{align*}
where  $u:=\overline{\tilde{y}}\circ z$. This in turn implies that $T(\overline{u})\succeq 0$, hence, by properties of circulant matrices,  $\hat{u}$ (the DFT of $u$) is non-negative. Furthermore, 
since  $z[0]=\tilde{y}[0]$
and $z[1]=\tilde{y}[1]$, it follows that $u[0]=u[1]=1$. So, by Lemma \ref{lem:LA} we
conclude that $u[k]=1$ for all $k$ and $z = \tilde{y}$. This implies immediately $Z=zz^*=\tilde{y}\tilde{y}^*$.


\subsection{Proof of Lemma \ref{lem:uniquness}} \label{sec:proof_uniquness}

We start by characterizing the global optima of~\eqref{eq:maxproblem}.
Recall that 
\begin{align*}
M(z) & = W^{(2)} \circ \tilde B \circ \overline{T(z)} = W^{(2)} \circ \tilde y\tilde y^* \circ T(\overline{z} \circ \tilde y).
\end{align*}
Defining $u = \overline{z} \circ \tilde y$, the objective function of~\eqref{eq:maxproblem} simplifies to
\begin{align*}
f(z) & = \inner{z}{M(z)z} = \sum_{k,\ell=0}^{N-1} W[k, \ell]^2 \Re\left\{ u[k] \overline{u[\ell]} u[\ell-k]  \right\}.
\end{align*}
Clearly, $f(\tilde y)$ is maximal over $\calM$. Any other global optimum must attain the same objective value. By our assumptions on $W$, this implies for all  ${k, \ell, \ell - k~\in\{1, \ldots, K\} \bmod N}$  that
\begin{align} 
 u[k] \overline{u[\ell]} = \overline{ u[\ell-k] }.
\label{eq:invertibilitymaster}
\end{align}
Consider $k = 1$ and $\ell = 2, \ldots, K$; this easily leads to $u[k] = u[1]^k$ for $k\in\{1, \ldots, K\}$.
Now set $k = K, \ell = 1$. Using periodicity of indexing, since $\ell - k = 1 - K = N + 1 - K \bmod N$, and using the conditions $K \geq \frac{N+1}{2}$ and $K \leq N-1$, we have $N + 1 - K \in \{2, \ldots, K\}$, so that~\eqref{eq:invertibilitymaster} applies. On one hand, we find $u[\ell-k] = u[N+1-K] = u[1]^{N+1-K}$, and on the other hand we find $\overline{u[\ell-k]} = u[k] \overline{u[\ell]} = u[1]^{K-1}$. Thus, $u[1]^N = 1$. If we let $u[1] = e^{i\theta}$ for some $\theta \in \reals$, it follows that $\theta = m\frac{2\pi}{N}$ for some integer $m$. If $x$ has zero mean, $u[0]$ is irrelevant and can be set to $1$; otherwise, equation~\eqref{eq:invertibilitymaster} also holds with $k=\ell=0$, so that $u[0]=1$. Recalling $u = \overline{z} \circ \tilde y$, it follows that if $z$ is optimal, then $z[k] = \tilde y[k] e^{-2\pi i m \frac{k}{N}}$ for relevant $k$'s. For $m$ in $\{0, \ldots, N-1\}$, these are exactly the phases of the DFTs of all $N$ circular time-shifts of $x$, which are indeed all optimal.

It remains to show that the amplitudes of the DFT of $x$ (the power spectrum of $x$) can be recovered from $B$. If $x$ has nonzero mean, the power spectrum can be read off the diagonal of $B$. If $y[0] = 0$, the power spectrum can still be recovered. Indeed, consider $\log|B[k, \ell]|$:
\begin{align*}
\log|B[k, \ell]| & = \log|y[k]| + \log|y[\ell]| + \log|y[\ell - k]|.
\end{align*}
These provide linear equations in $\log|y[1]|, \ldots, \log|y[K]|$. It suffices to collect $K$ independent ones. Considering in order equations with $k = 1, \ell = 2, \ldots, K$
{
followed by $k = 2, \ell = 4$,
we get a structured linear system.
	For example, with $K = 5$:
	\begin{align*}
		\begin{bmatrix}
		2 & 1 &   &   &    \\
		1 & 1 & 1 &   &    \\
		1 &   & 1 & 1 &	   \\
		1 &   &   & 1 & 1  \\
		  & 2 &   & 1 & 
		\end{bmatrix}
		\begin{bmatrix}
		\log|y[1]| \\
		\log|y[2]| \\
		\log|y[3]| \\
		\log|y[4]| \\
		\log|y[5]|
		\end{bmatrix} & = \begin{bmatrix}
		\log|B[1, 2]| \\
		\log|B[1, 3]| \\
		\log|B[1, 4]| \\
		\log|B[1, 5]| \\
		\log|B[2, 4]|
		\end{bmatrix}.
	\end{align*}
	The determinant of the matrix is $(-1)^{K} 6$, 
proving $|y[k]|$ can be recovered from $B$. Together with $y[0] = 0$ and the phases, this is sufficient to recover $x$ up to global time shift.
}



\subsection{Proof of identity \eqref{eq:Madjk}} \label{sec:proof_identity_Madj}

Observe that $T$ can be expressed as
\begin{align} \label{eq:Tsum} 
T(z) & = \sum_{k=0}^{N-1} z[k] T_k, 
\end{align}
where $T_k$ is a circulant matrix with ones in its $k$th (circular) diagonal and zero otherwise as defined in \eqref{eq:T}.
Using~\eqref{eq:Tsum} and~\eqref{eq:Mdef},
\begin{align*}
\inner{M(z)}{X} & = \inner{W^{(2)} \circ \tilde B \circ \overline{T(z)}}{X} \\
& = \inner{T(z)}{W^{(2)} \circ \tilde B \circ \overline{X}} \\
& = \sum_{k=0}^{N-1} \inner{z[k] T_k}{W^{(2)} \circ \tilde B \circ \overline{X}} \\
& = \Re\left\{ \sum_{k=0}^{N-1} \overline{z[k]} \trace\left( T_k\transpose \left(W^{(2)} \circ \tilde B \circ \overline{X}\right) \right)\right\} \\ & = \inner{z}{M\adj(X)}.
\end{align*}
Since this must hold for all $z$ and $X$, by identification on the last line,
\begin{align*}
M\adj(X)[k] & = \trace\left( T_k\transpose \left(W^{(2)} \circ \tilde B \circ \overline{X}\right) \right).
\end{align*}

\subsection{Proof of Lemma \ref{lemma:symmetryMadj}} \label{sec:proof_lemma_symmetry_Madj}
 On one hand, note that
	\begin{align*}
	(M(z)z)[k] & = \sum_{\ell = 0}^{N-1} M(z)[k,\ell] z[\ell] \\& = \sum_{\ell = 0}^{N-1} W^2[k,\ell] \tilde B[k,\ell] \overline{z[\ell-k]} z[\ell].
	\end{align*}
	On the other hand, we have from~\eqref{eq:Madjk} and  \eqref{eq:T} that
	\begin{align*}
	(M\adj(zz^*))[k] & = \trace\left( T_k\transpose \left(W^{(2)} \circ \tilde B \circ \overline{zz^*}\right) \right) \\
	& = \sum_{\ell, \ell' = 0}^{N-1} T_k[\ell' ,\ell] \left(W^{(2)} \circ \tilde B \circ \overline{zz^*}\right)[\ell', \ell] \\
	& = \sum_{\ell = 0}^{N} W^2[\ell - k, \ell] \tilde B[\ell - k, \ell] \overline{z[\ell - k]} z[\ell] \\ &= \sum_{\ell = 0}^{N} W^2[k,\ell] \tilde B[k,\ell] \overline{z[\ell - k]} z[\ell],
	\end{align*}
	where we used the assumed symmetries of $\tilde B$ and $W$ in the last step only. The expressions match, concluding the proof.

\subsection{Proof of Lemma \ref{lem:second_order_critical_points}} \label{sec:proof_lem_second_order_critical_points}

	Since $z$ is critical, $\nabla f(z) \circ \overline{z}$ is real. It remains to show that it is positive. For this, we will use the second-order condition. Under the symmetry assumptions of Lemma~\ref{lemma:symmetryMadj} which hold a fortiori in the noiseless case, the term $\inner{\dot z}{\nabla^2 f(z)[\dot z]}$ can be developed into
	\begin{align*}
&	\inner{\dot z}{\nabla^2 f(z)[\dot z]} \\  &= \inner{\dot z}{2M(\dot z)z + 2M(z)\dot z + M(\dot z)^*z + M(z)^*\dot z} \nonumber\\
	&= 3\inner{M(z)\dot z}{\dot z} + 2\inner{M(\dot z)z}{\dot z} + \inner{M(\dot z)\dot z}{z} \nonumber\\
	&= 3\inner{z}{M^{\mathrm{adj}}\left(\dot z \dot z^*\right)} + 2\inner{M(\dot z)z}{\dot z} + \inner{M(\dot z)\dot z}{z} \nonumber\\
	&= 4\inner{M(\dot z)\dot z}{z} + 2\inner{M(\dot z)z}{\dot z},
	\end{align*}
	where the last equality follows  Lemma~\ref{lemma:symmetryMadj}.
	Hence,
	\begin{equation*}
	\inner{\dot z}{\nabla^2 f(z)[\dot z]}  =  \inner{M(\dot z)}{4z\dot z^* + 2\dot z z^*}. 
	\end{equation*}
	For some index $k$, consider the tangent vector $\dot z = (iz[k])e_k$ in $\T_z\calM$, where $e_k \in \RN$ is the $k$th canonical basis vector. Since $z$ is second-order critical, we have the inequality
	\begin{align*}
		\inner{M(\dot z)}{4z\dot z^* + 2\dot z z^*} &= \inner{\dot z}{\nabla^2 f(z)[\dot z]} \\ & \leq \inner{\dot z}{D(z) \dot z} \\&= \sum_{\ell=0}^{N-1} |\dot z[\ell]|^2 \nabla f(z)[\ell] \overline{z[\ell]}.
	\end{align*}
	Plugging in the expression of $\dot z$, this is equivalent to
	\begin{align} 
	\inner{\tilde B \circ \overline{(iz[k])}T_k}{4\overline{(iz[k])}ze_k^* + 2(iz[k])e_k z^*} & \leq \nabla f(z)[k] \overline{z[k]}.
	\label{eq:foo0}
	\end{align}
	The left hand side develops as follows:
	\begin{align*}
		&\inner{\tilde B \circ \overline{(iz[k])}T_k}{4\overline{(iz[k])}ze_k^* + 2(iz[k])e_k z^*} \\ =& \inner{\tilde B \circ T_k}{4ze_k^* + 2(iz[k])^2e_k z^*} \nonumber\\
		=& \Re\left\{\sum_{\ell, \ell' = 0}^{N-1} \overline{\tilde B[\ell',\ell]} \delta_{k, \ell - \ell'} \left( 4z[\ell']\delta_{k, \ell} - 2z[k]^2 \overline{z[\ell]} \delta_{k, \ell'} \right) \right\} \nonumber\\
 =& \Re\left\{\sum_{\ell = 0}^{N-1} \overline{\tilde B[\ell-k,\ell]} \left( 4z[\ell-k] \delta_{k, \ell} - 2z[k]^2 \overline{z[\ell]} \delta_{k, \ell-k} \right) \right\} \nonumber, 
	\end{align*}
			where in the last equality we substituted $\ell' = \ell-k$.  Using the definition of $\tilde B$ as in~\eqref{eq:bispec} we then get

\begin{align} 
		&\inner{\tilde B \circ \overline{(iz[k])}T_k}{4\overline{(iz[k])}ze_k^* + 2(iz[k])e_k z^*} \nonumber \\ 		& = \Re\left\{ 4\overline{\tilde B[0,k]} z[0] - 2\overline{\tilde B[k,2k]} z[k]^2\overline{z[2k]} \right\} \nonumber\\
		& = \Re\left\{ 4\overline{\tilde y[0]} z[0] - 2\overline{\tilde y[k]}^2 \tilde y[2k] z[k]^2\overline{z[2k]} \right\} \nonumber\\
		&   = 4\inner{\tilde y[0]}{z[0]} - 2\inner{\tilde y[k]^2 \overline{\tilde y[2k]}}{z[k]^2\overline{z[2k]}}.
		\label{eq:foo1}
\end{align}
	In particular, for $k = 0$, this simplifies to $2\inner{\tilde y[0]}{z[0]}$. Then, the inequality is
	\begin{align*}
		2\inner{\tilde y[0]}{z[0]} & \leq \nabla f(z)[0] \overline{z[0]}\\
		                         & = 2\left(M(z)z\right)[0]\overline{z[0]} + \left(M(z)^*z\right)[0]\overline{z[0]} \\
		                         & = \sum_{\ell = 0}^{N-1} 2 M(z)[0,\ell]z[\ell] \overline{z[0]} + \overline{M(z)[\ell,0]} z[\ell] \overline{z[0]} \\
		                         & = \sum_{\ell = 0}^{N-1} 2\tilde B[0,\ell]\overline{z[0]} + \overline{\tilde B[\ell, 0]} z[-\ell]z[\ell] \overline{z[0]} \\
		                         & = \tilde y[0] \overline{z[0]} \left(2N + \sum_{\ell = 0}^{N-1} \overline{\tilde y[\ell]} \overline{\tilde y[-\ell]} z[\ell] z[-\ell]\right).
	\end{align*}
	Let the sum in the right hand side be denoted by $t\in\mathbb{C}$. Clearly, $|t| \leq N$ and $2\inner{\tilde y[0]}{z[0]} \geq -2$, so
	\begin{align*}
		-2 \leq (\tilde y[0] \overline{z[0]})\left(2N + t\right).
	\end{align*}
	Since $z$ is critical, the right hand side is real, so that it is equal to either $|2N+t|$ or $-|2N+t|$  (any real number is either its absolute value or the opposite). For contradiction, let us assume it is equal to $-|2N+t|$. Then,
	\begin{align*}
		-2 \leq -|2N + t| \leq -(2N - |t|) \leq -N.
	\end{align*}
	This is impossible if $N > 2$. Hence, $(\tilde y[0] \overline{z[0]})\left(2N + t\right) = |2N + t|$, so that $\overline{\tilde y[0]} z[0] = \sign(2N+t)$. Using $|t|\leq N$, it is a simple exercise to determine that $\inner{\tilde y[0]}{z[0]} \geq \frac{\sqrt{3}}{2}$. Turning back to general $k$ and using~\eqref{eq:foo0} and~\eqref{eq:foo1}, it follows that
	\begin{align*}
		\nabla f(z)[k] \overline{z[k]} & \geq 4\inner{\tilde y[0]}{z[0]} - 2\inner{\tilde y[k]^2 \overline{\tilde y[2k]}}{z[k]^2\overline{z[2k]}} \\
		& \geq 2(\sqrt{3}-1) \\& > 0.
	\end{align*}
	Manually checking the statement for $N = 1, 2$ concludes the proof.
	

	\subsection{Proof of Lemma \ref{lem:uniqueness_real}} \label{sec:proof_uniqueness_real}
	
	The proof is identical to that of Lemma~\ref{lem:uniquness}, up to a few differences we highlight. With the same definition of the vector $u$, the identity $u[k]\overline{ u[\ell] } = \overline{ u[\ell-k] }$ still holds for $k, \ell$ such that $k, \ell, \ell - k$ are in $\{1, \ldots, K\} \cup \{N-1, \ldots, N-K\} \bmod N$. Using real symmetry, $u[-k] = \overline{u[k]}$ for all $k$, hence the identity also reads $u[k] = u[\ell] u[k-\ell]$.
	
	Using this rule for $k = 2, \ldots, K$ and fixed $\ell = 1$, we easily get $u[k] = u[1]^k$ for $k \in \{1, \ldots, K \}$. We now consider the rule for $k = N-K, \ell = K$. Since $K \geq \frac{1}{3}N$, it is clear that $k-\ell = N-2K \leq \frac{1}{3}N \leq K$. Since $K \leq \frac{N-1}{2}$, it also holds that $k - \ell \geq 1$. As a result, $u[N-K] = u[1]^{N-K}$. Likewise, using real symmetry and indexing modulo $N$, $u[N-K] = u[-K] = \overline{u[K]} = u[1]^{-K}$. Combining the two, it follows that $u[1]^N = 1$. One can then conclude as in the proof of the previous lemma.
	{The magnitudes of the DFT can also be recovered, following the same procedure as in Lemma~\ref{lem:uniquness}: if $y[0] \neq 0$, read the power spectrum off the diagonal; otherwise, obtain $|y[1]|, \ldots, |y[K]|$ via the same linear system and use $|y[N-k]| = |y[k]|$ for $k = 1\ldots K$.}


\subsection{Proof of Lemma \ref{lem:Mz}} \label{sec:proof_lem_Mz}

For ease of notation, let $A = W^{(2)} \circ \tilde B$. By the assumptions of Lemma~\ref{lemma:symmetryMadj}, we know that $A[\ell-k, \ell] = A[k, \ell]$ for all $k, \ell$. Since $A$ is now also Hermitian, we have
	\begin{align*}
		A[k, \ell] & = \overline{A[\ell, k]} = \overline{A[k-\ell, k]} = A[k, k-\ell].
	\end{align*}
	Hence, with the change of variable $\ell' = k-\ell$ and using both $z[-k] = \overline{z[k]}$ and $\dot z[-k] = \overline{\dot z[k]}$ for any $k$ since $z\in\calM_\reals$ and $\dot z\in\T_z\calM_\reals$:
	\begin{align*}
	\left( M(z)\dot z \right)[k] & = \sum_{\ell = 0}^{N-1} M(z)[k, \ell] \dot z[\ell] \\
	& = \sum_{\ell = 0}^{N-1} A[k, \ell] \overline{z[\ell-k]} \dot z[\ell] \\
	& = \sum_{\ell' = 0}^{N-1} A[k, k-\ell'] \overline{z[-\ell']} \dot z[k-\ell'] \\
	& = \sum_{\ell' = 0}^{N-1} A[k, \ell'] z[\ell'] \overline{\dot z[\ell'-k]} = M(\dot z) z.
	\end{align*}
	This concludes the proof.


\end{document}